\begin{document}

\title{Faster Deterministic Algorithms for Packing, Matching and $t$-Dominating Set Problems}

\titlerunning{Packing, Matching and $t$-Dominating Set Problems}

\author{Shenshi Chen \and Zhixiang Chen}

\authorrunning{S. Chen \and Z. Chen}

\institute{Department of Computer Science\\
University of Texas-Pan American\\ Edinburg, TX 78539, USA\\
\email{schen@broncs.utpa.edu}, \email{zchen@utpa.edu}}

\maketitle

\begin{abstract}
In this paper, we devise three deterministic algorithms for
solving the $m$-set $k$-packing,
$m$-dimensional $k$-matching, and  $t$-dominating set  problems
in time $O^*(5.44^{mk})$, $O^*(5.44^{(m-1)k})$ and $O^*(5.44^{t})$, respectively.
Although recently there have been remarkable progresses on randomized solutions to those problems,
yet our bounds make good improvements on the best known bounds for deterministic solutions to those
problems.
\\
\\
{\bf Keywords:} Packing; matching; dominating sets; group algebra; monomial testing;
randomized algorithms; derandomization.
\end{abstract}

\section{Introduction}

\subsection{The $m$-Set $k$-Packing Problem}
Let $\mathcal{ S}$ be a collection of sets so that each member in $\mathcal{ S}$ is a subset of an $n$-element set
$U$. In addition, members in $\mathcal{ S}$ have the same size $m\ge 3$. The $m$-set $k$-packing problem asks
whether there are $k$ members in $\mathcal{ S}$ such that those members are pairwise disjoint.
It is known that this problem is $W[1]$-complete with respect to the parameter $k$ \cite{fellows99}.
\vspace{-4mm}
\begin{table}\label{tab-1}
\center
\caption{Algorithms for $m$-set $k$-packing}
\vspace{-2mm}
\begin{tabular}{|l|l|l|}\hline
~~References & ~~Randomized & ~~Deterministic \\
           & ~~Algorithms           & ~~Algorithms \\ \hline
~~Jia {\em et al.} \cite{jia04}~~&        &~~$\displaystyle O^*(m^k(g(m,k))^{mk})$,~~ \\
                             &        &~~ where $g(m,k)$ is linear in $mk$.~~ \\ \hline
~~Koutis \cite{koutis05}~~ & ~~$\displaystyle O^*(10.88^{mk})$~~ & ~~$\displaystyle O^*(25.6^{mk})$ ~(see \cite{jianer-chen07,liu06})~~ \\ \hline
~~Fellows {\em et al.} \cite{fellow08}~~  &       & ~~$\displaystyle O^*(13.78^{mk})$ ~(see \cite{jianer-chen07,liu06})~~ \\ \hline
~~Koutis \cite{koutis08}~~ &~~ $\displaystyle O^*(2^{mk})~~$ & \\ \hline
~~Bj\"{o}rklund {\em et al.} \cite{Bjorklund2010b}~~  &~~ $O^*(f(m,k))$ ~~&  \\ \hline
~~Chen \cite{bill13}~~ & & ~~$\displaystyle O^*(12.8^{mk})$~~\\ \hline
~~This paper (Theorem \ref{packing-thm})~~ &  &~~ $\displaystyle O^*(5.44^{mk})$ ~~\\ \hline
\end{tabular}
\end{table}
\vspace{-4mm}
 In literature,
most work has been done for the special case of $m=3$ (see, \cite{jianer-chen07,liu06,Bjorklund2010b} for an overview of
the work). For the general $m$-set $k$-packing problem, Table \ref{tab-1} summarizes the previous algorithms and
compares them with our result that improves the best known deterministic time bounds.

In Chen {\em et al.} \cite{jianer-chen07} and Liu {\em et al.} \cite{liu06}
there are detailed discussions about many previous algorithms for the $m$-set $k$-packing problem,
especially for the case of $m=3$. For example, the deterministic algorithm by Koutis \cite{koutis05} has time
$O^*(2^{O(mk)})$. It was pointed out in  \cite{jianer-chen07,liu06} that this bound has large constants.
E.g., when $m=3$, the bound is at least $O^*(32000^{3k})$.
It was also pointed out in the two papers that this bound can be improved to $O^*(25.6^{3k})$ by a new perfect hashing technique.
Furthermore, this new hashing technique can improve the general bound by Koutis to  $O^*(25.6^{mk})$, which is listed in Table \ref{tab-1}.
For the deterministic algorithm by Fellows {\em et al.} \cite{fellow08}, the original bound is $exp(O(mk))$ with large constants.
When $m=3$, it was pointed out in \cite{jianer-chen07,liu06} that this bound is $O^*((12.7D)^{3k})$ for $D\ge 10.4$ and that
this bound can be further improved to $O^*(13.78^{mk})$, which is listed  in Table \ref{tab-1},
by the aforementioned perfect hashing technique.

As noted by Bj\"{o}klund {\em et al.} \cite{Bjorklund2010b}, the $O^*(f(m,k))$ bound of their randomized algorithm
is most efficient for small $m$. E.g., when $m=3$,  $O^*(f(3,k)) = O^*(1.4953^{3k})$. In general, this bound does not give
a $O^*((2-\epsilon)^{mk})$ bound for some small value $0<\epsilon <1$.

\subsection{The $m$-dimensional $k$-Matching Problem}

Let $U_1, U_2,\ldots U_m$ be pairwise disjoint sets and $U = U_1\times U_2\times \cdots \times U_m$.
The $m$-dimensional $k$-matching problem asks, for any  given set $\mathcal{C} \subseteq U$ of $m$-component tuples,
whether $\mathcal{C}$ contains a size $k$ subset $\mathcal{C'}$ such that
all tuples in $\mathcal{C'}$ are mutually disjoint, i.e., any two tuples in $\mathcal{C'}$ have no common components.
It is well known that the $3$-dimensional $k$-matching problem is a classical NP-complete problem.
Like the $m$-set $k$-packing problem, most work in literature has concentrated on the special case of $m=3$.
E.g., Liu {\em et al.} \cite{liu06} obtained the best deterministic time bound $O^*(2.77^{3k})$ for $3$-dimensional $k$-packing.
\cite{jianer-chen07,liu06,Bjorklund2010b} have good overviews of the research about this problem.
For general $m$-dimensional and $k$-packing Table \ref{tab-2} summarizes the previous algorithms
in comparison with our result that improves the best known deterministic time bounds.
\vspace{-4mm}
\begin{table}\label{tab-2}
\center
\caption{Algorithms for $m$-dimensional $k$-matching}
\vspace{-2mm}
\begin{tabular}{|l|l|l|}\hline
~~References & ~~Randomized & ~~Deterministic \\
           & ~~Algorithms           & ~~Algorithms \\ \hline
~~Downey and Fellows \cite{fellow08}~~&        &~~$\displaystyle O^*((mk)!(mk)^{3mk+1})$~~ \\ \hline
~~Koutis \cite{koutis05}~~ & ~~$\displaystyle O^*(10.88^{mk})$~~ & ~~$\displaystyle O^*(25.6^{mk})$ ~(see \cite{jianer-chen07,liu06})~~ \\ \hline
~~Fellows {\em et al.} \cite{fellow08}~~  &       & ~~$\displaystyle O^*(13.78^{mk})$ ~(see \cite{jianer-chen07,liu06})~~ \\ \hline
~~Koutis \cite{koutis08}~~ &~~ $\displaystyle O^*(2^{mk})~~$ & \\ \hline
~~Koutis and Williams \cite{koutis09}~~&~~$\displaystyle O^*(2^{(m-1)k})$~~ & \\ \hline
~~Bj\"{o}rklund {\em et al.} \cite{Bjorklund2010b}~~  &~~ $\displaystyle O^*(2^{(m-2)k})$ ~~&  \\ \hline
~~This paper (Theorem \ref{matching-thm})~~ &  &~~ $\displaystyle O^*(5.44^{(m-1)k})$ ~~\\ \hline
\end{tabular}
\end{table}
\vspace{-4mm}

As mentioned in the previous subsection, Chen {\em et al.} \cite{jianer-chen07} and Liu {\em et al.} \cite{liu06}
presented detailed discussions about many algorithms for the $m$-dimensional $k$-matching problem,
especially for the case of $m=3$.  The deterministic time bounds listed in the table
for Koutis \cite{koutis05} and Fellows {\em et al.} \cite{fellow08} are not the original bounds in the respective papers.
Instead, those are pointed out in \cite{jianer-chen07} and \cite{liu06} (see the above subsection).

\subsection{The $t$-Dominating Set Problem}

This problem is a partial satisfaction variant of the dominating set problem, an classical NP-complete problem.
Given any simple undirected graph $G=(V,E)$, for any nodes $u, v \in V$, we say that $u$ dominates $v$ if either $u=v$ or
$(u,v) \in E$. The $t$-dominating set problem asks us to find a minimal number $k$ such that
 there exists a size $k$ set $S\subseteq V$ that dominates at least $t$ distinct nodes. When $t= |V|$, this problem
becomes the dominating set problem, hence it is $W[2]$-complete \cite{fellow08} with respect to the parameter $k$.
Table \ref{tab-3} summarizes the previous algorithms for this problem
and compares them with our result that improves the best known deterministic time bound.
\vspace{-4mm}
\begin{table}\label{tab-3}
\center
\caption{Algorithms for $t$-dominating sets}
\vspace{-2mm}
\begin{tabular}{|l|l|l|}\hline
~~References & ~~Randomized & ~~Deterministic \\
           & ~~Algorithms           & ~~Algorithms \\ \hline
~~Kneis {\em et al.}\cite{kneis07}~~&~~$\displaystyle O^*((4+\epsilon)^t)$     &~~$O^*((16+\epsilon)^t)$,~~ \\ \hline
~~Koutis and Williams \cite{koutis09}~~&~~$\displaystyle O^*(2^t)$~~ & \\ \hline
~~This paper (Theorem \ref{dominating-thm})~~ &  &~~ $\displaystyle O^*(5.44^{t})$ ~~\\ \hline
\end{tabular}
\end{table}
\vspace{-4mm}

\subsection{Techniques}

Our techniques are built upon derandomizing a randomized group algebraic approach to testing $q$-monomials in
a multivariate polynomial represented by a formula. Group algebraic approach to testing multilinear
monomials is initiated by Koutis \cite{koutis08} and further developed by Williams \cite{williams09}.
Later progresses on testing multilinear monomials and $q$-monomials in multivariate polynomials can be found in \cite{koutis09,chen11,chen11a,chen11b,chen12a,chen12b,bill12,bill13}.
Randomized algebraic techniques have recently led to the once fastest
randomized algorithms of time $O^*(2^k)$ for the  $k$-path
problem and other problems \cite{koutis08,williams09}. Another recent
remarkable example is the improved $O(1.657^n)$ time randomized algorithm for the Hamiltonian
path problem by Bj\"{o}rklund~\cite{Bjorklund2010}. 
 Bj\"{o}rklund {\em et al.} further extended the above randomized algorithm
to the $k$-path testing problem with $O^*(1.657^k)$ time complexity \cite{Bjorklund2010b}.

Chen in \cite{bill13} designed an $O^*(2^k)$ time randomized algorithm for solving
the $q$-monomial testing problem for polynomials represented by circuits,
regardless of the primality of $q\ge 2$ and derived an  $O^*(12.8^k)$ deterministic algorithm
for polynomials represented by formulas. The second algorithm is devised by derandomizing
the first algorithm with the help of the perfect hashing functions by Chen {\em et al.}
\cite{jianer-chen07} and the deterministic polynomial identity testing algorithm by Raz
and Shpilka \cite{raz05} for noncommunicative polynomials.

Our approach is to further improve the $O^*(12.8^k)$ time bound of the deterministic algorithm in \cite{bill13}.
Besides algebraic replacements, we give a formal proof about how to use Walsh-Hadamard transformation
to speed up the multiplication of two group algebraic elements. We present a simple and efficient
deterministic algorithm to solve the polynomial identity testing problem for a special case of read-once formulas.
We also use a near optimal family of  perfect hashing functions  by Naor {\em et al.} \cite{naor95} to assist the
derandomization process. We obtain the following two results:

\begin{theorem}\label{thm-1}
Let $q\ge 2$ be a fixed integer. Let $F(x_1,x_2,\ldots,x_n)$ be an
$n$-variate polynomial represented by a formula $\mathcal{ C}$
of size $s(n)$. There is a deterministic $O^*(5.44^ks^4(n))$ time
algorithm to test whether $F$ has a $q$-monomial of degree $k$ in
its sum-product expansion.
\end{theorem}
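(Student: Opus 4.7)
The plan is to derandomize Chen's randomized $O^*(2^k)$ $q$-monomial test of~\cite{bill13} with an overhead substantially smaller than the $6.4^k$ factor implicit in the previous $O^*(12.8^k)$ deterministic bound. The target constant $5.44$ is essentially $2e$, so the idea is to combine the intrinsic $2^k$ enumeration inside the group-algebra fingerprint with a perfect-hash-family derandomization of size $e^{k+o(k)}$ given by Naor, Schulman and Srinivasan~\cite{naor95}, rather than the earlier hash family from~\cite{jianer-chen07} whose size is roughly $6.4^k$.

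First I would recall the group-algebra framework for $q$-monomial testing: each variable $x_i$ is replaced by a fingerprint of the form $u_i + v_i$ in an algebra $R[\mathbf{Z}_2^k]$ chosen so that a degree-$k$ $q$-monomial of $F$ becomes a nonzero coefficient at a designated group element. Evaluating the formula $\mathcal{C}$ bottom-up requires one group-algebra multiplication per gate, and each multiplication is a $\mathbf{Z}_2^k$-convolution on $2^k$ coordinates. I would give a formal proof that the Walsh--Hadamard transform diagonalizes this convolution, so that after transforming each factor the product becomes coordinate-wise and a full gate operation costs only $\tilde{O}(2^k)$; the per-iteration evaluation of $\mathcal{C}$ thus runs in time $O^*(2^k s(n))$, a strict saving over the naive $O^*(4^k)$ multiplication used in~\cite{bill13}.

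Next I would replace the random choice of variable assignments by enumeration over $(n,k)$-perfect hash functions $h : [n] \to [k]$ from the near-optimal family of~\cite{naor95}. For any degree-$k$ $q$-monomial with support $T$, some $h$ in the family is injective on $T$, and composing $h$ with the internal $2^k$ group-algebra fingerprint guarantees that the monomial's contribution is exposed as a nonzero coefficient at the target group element. Multiplying the $e^{k+o(k)}$ outer hash functions by the $2^k$ internal enumeration yields a total of $(2e)^{k+o(k)} \leq 5.44^k$ iterations (for $k$ large enough to absorb the $o(k)$ slack), each of cost polynomial in $s(n)$, and correctness follows from the conjunction of the perfect-hashing guarantee and the completeness of the group-algebraic fingerprint.

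I expect the main obstacle to be the deterministic polynomial-identity test that must certify whether the targeted coefficient is nonzero at the end of each outer iteration. The construction has to be arranged so that, after fingerprinting and hashing, the resulting expression falls into a class for which black-box deterministic PIT runs in time polynomial in $s(n)$ --- this is precisely the ``special case of read-once formulas'' promised in the techniques overview, and the remaining $s^3(n)$ of the $s^4(n)$ overhead should be traceable to this PIT cost. A secondary but important technicality is verifying that the Walsh--Hadamard diagonalization is valid uniformly for every $q \geq 2$ over the specific ring underlying the group algebra (requiring suitable roots of unity to be invertible), and amortizing the $o(k)$ factor in the Naor et al.\ family so that it is absorbed into the polynomial prefactor rather than inflating the base $5.44$.
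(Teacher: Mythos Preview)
Your proposal is essentially the paper's own proof: the $5.44 = 2e$ factorization via the Naor--Schulman--Srinivasan perfect-hash family, the Walsh--Hadamard speedup of the $\mathbf{Z}_2^k$-convolutions (formalized as Lemma~\ref{wh-lem2}), and a dedicated deterministic PIT lemma for what the paper calls $S$-read-once formulas (Lemma~\ref{pit-lem}) are exactly the three ingredients combined there. The only point you blur is that for $q>2$ the hashing is not over the $n$ original variables but over the $(q-1)n$ auxiliary $y$-variables produced by the substitution $x_i \mapsto \sum_{j} z_{ij} y_{ij}$ of Section~\ref{trans}, which turns a degree-$k$ $q$-monomial in $x$ into a multilinear $y$-monomial with support of size exactly $k$; the accompanying fresh $z$-variables are precisely what prevent characteristic-$2$ cancellations and what make the residual formula $S$-read-once, so your anticipated ``main obstacle'' and the $s^4(n)$ overhead are resolved by that same construction.
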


\begin{corollary}\label{cor}
Let $q\ge 2$ be a fixed integer. Let $F(x_1,x_2,\ldots,x_n, z)$ be a
$(n+1)$-variate polynomial represented by a formula $\mathcal{ C}$
of size $s(n)$. There is a deterministic $O^*(t^2 5.44^ks^4(n))$ time
algorithm to test whether $F$ has a monomial $z^t \pi$ in
its sum-product expansion such that $\pi$ is a $q$-monomial of $x$-variables with degree $k$.
\end{corollary}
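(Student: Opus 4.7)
The natural strategy is to reduce the corollary to Theorem \ref{thm-1} by extracting the coefficient of $z^t$ from $F$ and applying the $q$-monomial test to the resulting $n$-variate polynomial. View $F$ as a univariate polynomial in $z$ whose coefficients are polynomials in $x_1,\ldots,x_n$, writing
\[
  F(x_1,\ldots,x_n,z) \;=\; \sum_{i=0}^{d} G_i(x_1,\ldots,x_n)\, z^i.
\]
Then $F$ contains a monomial of the form $z^t\pi$ with $\pi$ a degree-$k$ $q$-monomial in the $x$-variables if and only if $G_t$ contains $\pi$ in its sum-product expansion. To extract $G_t$ I would use Lagrange interpolation on the univariate polynomial in $z$: choosing $t+1$ distinct field elements $\alpha_0,\ldots,\alpha_t$, there are scalars $c_0,\ldots,c_t$ (namely the coefficients of $z^t$ in the corresponding Lagrange basis polynomials) so that
\[
  G_t(x_1,\ldots,x_n) \;=\; \sum_{j=0}^{t} c_j \cdot F(x_1,\ldots,x_n,\alpha_j).
\]

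Substituting $z=\alpha_j$ into the formula $\mathcal{C}$ gives, for each $j$, a formula of size at most $s(n)$ computing $F(x_1,\ldots,x_n,\alpha_j)$. Taking the weighted sum of these $t+1$ formulas with the scalar weights $c_j$ produces a single formula $\mathcal{C}'$ of size $O((t+1)\,s(n))$ that computes $G_t$. I would then run the algorithm of Theorem \ref{thm-1} on $\mathcal{C}'$ with parameter $k$ and return its answer as the answer for the corollary; correctness is immediate from the equivalence above.

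The main obstacle is obtaining the stated $O^*(t^2\cdot 5.44^k\cdot s^4(n))$ bound rather than the looser $O^*(t^4\cdot 5.44^k\cdot s^4(n))$ that follows from a black-box invocation of Theorem \ref{thm-1} with formula size $(t+1)\,s(n)$. Achieving the sharper $t^2$ factor requires opening up the algorithm of Theorem \ref{thm-1} and observing that its two expensive subroutines---the enumeration over the perfect-hashing family of Naor et al.\ and the deterministic identity test---can be shared across the $t+1$ substituted copies of $\mathcal{C}$, because all $t+1$ summands of $\mathcal{C}'$ arise from the same underlying formula tree and differ only at the leaves where $z$ has been replaced by a constant. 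Once this sharing is in place, the work that genuinely scales with $t$ is only the recombination step at the root and the propagation of the $t+1$ scalar values through each multiplicative gate, which contributes a quadratic rather than a quartic overhead. Verifying that this shared processing is compatible with the Walsh--Hadamard-sped-up group-algebra multiplications used inside Theorem \ref{thm-1} is the delicate part of the argument.
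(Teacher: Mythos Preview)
Your interpolation step has a real gap. You write $F=\sum_{i=0}^d G_i(x)\,z^i$ and then claim that $t+1$ evaluation points and suitable scalars $c_0,\ldots,c_t$ recover $G_t=\sum_j c_j F(x,\alpha_j)$. But $\sum_j c_j F(x,\alpha_j)=\sum_i G_i(x)\bigl(\sum_j c_j\alpha_j^i\bigr)$, and forcing $\sum_j c_j\alpha_j^i=\delta_{i,t}$ for all $0\le i\le d$ is $d+1$ linear conditions on only $t+1$ unknowns; when the $z$-degree $d$ exceeds $t$ (which the corollary does not forbid, and which does occur in the dominating-set and matching applications) no such scalars exist. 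With $d+1$ points you could fix this, but then the formula $\mathcal{C}'$ has size $O(d\,s(n))$ and your subsequent ``sharing'' argument would have to squeeze a $t^2$ factor out of a computation that now depends on $d$; the sketch you give does not make clear how that would work, and in particular the identity-testing routine of Lemma~\ref{pit-lem} walks the formula tree and does not obviously parallelize across the copies in the way you suggest.

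The paper avoids all of this with a more direct idea: rerun the algorithm of Theorem~\ref{thm-1} verbatim, but carry the variable $z$ symbolically through every gate, so that each scalar in $\mathcal{F}$ is replaced by a univariate polynomial in $z$. Because only monomials with $z$-degree exactly $t$ matter, one may truncate at degree $t$ (work in $\mathcal{F}[z]/(z^{t+1})$) throughout. Addition of such polynomials costs $O(t)$ and multiplication $O(t^2)$, so every arithmetic step in the Theorem~\ref{thm-1} algorithm is slowed by at most a factor of $t^2$, yielding $O^*(t^2\,5.44^k s^4(n))$ immediately. This sidesteps both the degree issue and the need to open up the algorithm and argue about shared subcomputations.
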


The above results will be proved in Section \ref{FDTM-section}.
We will design reductions to reduce the $m$-set $k$-packing,
$m$-dimensional $k$-matching and $t$-dominating set problems to the multilinear monomial testing problem,
and then use the above results to obtain our deterministic algorithms to solve those three problems.

\section{Preliminaries}\label{def-section}

For variables $x_1, \dots, x_n$,  for
$1\le i_1 < \cdots <i_k \le n$, $\pi =x_{i_1}^{s_1}\cdots
x_{i_t}^{s_t}$ is called a monomial. The degree of $\pi$, denoted
by $\mbox{deg}(\pi)$, is $\sum^t_{j=1}s_j$. $\pi$ is multilinear,
if $s_1 = \cdots = s_t = 1$, i.e., $\pi$ is linear in all its
variables $x_{i_1}, \dots, x_{i_t}$. For any given integer $q\ge
2$, $\pi$ is called a $q$-monomial if $1\le s_1, \dots, s_t \le q-1$.
In particular, a multilinear monomial is the same as  a $2$-monomial.

An arithmetic circuit, or circuit for short, is a directed acyclic
graph consisting of $+$ gates with unbounded fan-ins, $\times$ gates with two
fan-ins, and terminal nodes that correspond to variables. The size,
denoted by $s(n)$, of a circuit with $n$ variables is the number
of gates in that circuit. A circuit is  a formula
 if the fan-out of every gate is at most one, i.e.,
the underlying directed acyclic graph that excludes all the terminal nodes is a tree.
In other words, in a formula, only the terminal nodes can have more than one fan-out
(or out-going edge).

Given any multivariate polynomial represented by a circuit, for a fixed integer $q\ge 2$, 
the $q$-monomial testing problem asks whether 
is a $q$-monomial of degree $k$ in the sum-product expansion of the polynomial. 

Throughout this paper, the $O^*(\cdot)$ notation is used to
suppress $\mbox{poly}(n,k)$ factors in time complexity bounds.

By definition, any polynomial $F(x_1,\dots,x_n)$ can be expressed
as a sum of a list of monomials, called the sum-product expansion.
The degree of the polynomial is the largest degree of its
monomials in the expansion. With this expanded expression, it is trivial to
see whether $F(x_1,\dots,x_n)$ has a multilinear monomial, or a
monomial with any given pattern. Unfortunately, such an expanded expression is
essentially problematic and infeasible due to the fact that a
polynomial may often have exponentially many monomials in its
sum-product expansion. The challenge then is to test whether $F(x_1,\dots,x_n)$
has a multilinear monomial, or any other desired monomial, efficiently but
without expanding it into its sum-product representation.

For any integer $k \ge 1$, we consider the group
$Z^k_2$ with the multiplication $\cdot$ defined as follows. For
$k$-dimensional column vectors $\vec{x}, \vec{y} \in Z^k_2$ with
$\vec{x} = (x_1, \ldots, x_k)^T$ and $\vec{y} = (y_1, \ldots,
y_k)^T$, $\vec{x} \cdot \vec{y} = (x_1+y_1, \ldots, x_k+y_k)^T.$
$\vec{v}_0=(0, \ldots, 0)^T$ is the zero element in the group.
For any field $\mathcal{ F}$, the group algebra $\mathcal{F}[Z^k_2]$ is defined as
follows. Every element $u \in \mathcal{F}[Z^k_2]$ is a linear sum of the form
\begin{eqnarray}
 u &=& \sum_{\vec{x}_i\in Z^k_2,~ a_{i}\in \mathcal{F}} a_{i} \vec{x}_i. \nonumber
\end{eqnarray}
For any element
$v = \sum\limits_{\vec{x}_i\in Z^k_2,~ b_{i}\in \mathcal{F}} b_{i}
\vec{x}_i$,
we define
\begin{eqnarray}
 u + v  &=& \sum_{a_{i},~ b_{i}\in \mathcal{F},~  \vec{x}_i\in Z^k_2}  (a_i+b_i)
 \vec{x}_i, \  \mbox{and} \nonumber\\
u \cdot v &=& \sum_{a_i,~ b_j\in \mathcal{F},~ \mbox{ and }~\vec{x}_i,~ \vec{y}_j\in Z^k_2}  (a_i b_j)
(\vec{x}_i\cdot \vec{y}_j). \nonumber
\end{eqnarray}
For any scalar $c \in \mathcal{F}$,
\begin{eqnarray}
 c u &=& c \left(\sum_{\vec{x}_i\in Z^k_2, \ a_i\in \mathcal{F}} a_{i} \vec{x}_i\right)
 = \sum_{\vec{x}_i\in Z^k_2,\  a_{i}\in \mathcal{F}} (c a_{i})\vec{x}_i. \nonumber
\end{eqnarray}
The zero element in the group algebra $ \mathcal{ F}[Z^k_2]$ is
$ {\bf  0} = \sum_{\vec{v}} 0\vec{v}$, where $0$ is the zero element in $\mathcal{ F}$
and $\vec{v}$ is any vector in $\displaystyle Z_2^k$. For example,
${\bf  0} = 0\vec{v_0} = 0\vec{v}_1 + 0\vec{v}_2 + 0\vec{v}_3$,
for any $\displaystyle \vec{v}_i \in Z^k_2$, $1\le i\le 3$.
The identity element in the group algebra $\displaystyle \mathcal{ F}[Z^k_2]$ is
$ {\bf 1} = 1 \vec{v}_0 =  \vec{v}_0$, where $1$ is the identity element in $\mathcal{ F}$.
For any vector $\vec{v} =(v_1, \ldots, v_k)^T \in Z_2^k$, for
$i\ge 0$, let
$ (\vec{v})^i = (i v_1, \ldots, i v_k)^T.$
 In particular, when the field $\mathcal{ F}$ is  $Z_2$ (or in general, of characteristic $2$),
 in the group algebra $\mathcal{ F}[Z_2^k]$,
for any $\vec{z}\in Z_2^k$ we have $(\vec{v})^0 = (\vec{v})^2 =
\vec{v}_0$, and $\vec{z} + \vec{z} = \vec{0}$.

\section{Three Deterministic Algorithms} 

\subsection{The Packing Problem}

\begin{theorem}\label{packing-thm}
There is a deterministic algorithm for solving the $m$-set $k$-packing problem in time
$O^*(5.44^{mk})$.
\end{theorem}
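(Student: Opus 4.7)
The plan is to reduce the $m$-set $k$-packing problem to the $q$-monomial testing problem of Theorem~\ref{thm-1} with $q = 2$ (so that ``$q$-monomial'' means multilinear monomial) and target degree $mk$, and then apply Theorem~\ref{thm-1} directly.

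Given an instance $\mathcal{S} = \{S_1, \ldots, S_N\}$ of $m$-subsets of the ground set $U = \{u_1, \ldots, u_n\}$, the first step is to introduce a variable $x_i$ for each $u_i \in U$ and, for every $S_j \in \mathcal{S}$, define the degree-$m$ monomial $y_{S_j} = \prod_{u_i \in S_j} x_i$. Let $Y = \sum_{j=1}^{N} y_{S_j}$ and build the formula $\mathcal{C}$ computing
$$F(x_1, \ldots, x_n) \;=\; Y_1 \cdot Y_2 \cdots Y_k,$$
where $Y_1, \ldots, Y_k$ are $k$ independent syntactic copies of the subformula for $Y$. Because only terminal nodes are allowed to have fan-out greater than one in a formula, duplicating $Y$ (rather than sharing one copy) is essential; since each copy consists of $N$ degree-$m$ product terms summed together, $\mathcal{C}$ is a valid formula of size $s(n) = O(kNm)$, polynomial in the input size.

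The central equivalence to establish is: $F$ has a multilinear monomial of degree $mk$ in its sum-product expansion if and only if $\mathcal{S}$ contains $k$ pairwise disjoint sets. Expanding $F$ term by term yields a sum, over all $k$-tuples $(S_{i_1}, \ldots, S_{i_k}) \in \mathcal{S}^k$, of monomials $y_{S_{i_1}} \cdots y_{S_{i_k}}$, each of total degree $mk$. Such a product is multilinear precisely when $S_{i_1}, \ldots, S_{i_k}$ are pairwise disjoint (forcing distinctness as $m \ge 3$); any repeated element would contribute an $x_i^2$ factor. Conversely, any $k$ pairwise disjoint members of $\mathcal{S}$ contribute a multilinear monomial of the required degree that is not cancelled by any other tuple (since distinct tuples of disjoint sets produce distinct monomials over $U$).

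Invoking Theorem~\ref{thm-1} on $\mathcal{C}$ with parameters $q = 2$ and degree $mk$ then gives a deterministic algorithm running in time $O^*(5.44^{mk} s^4(n)) = O^*(5.44^{mk})$, since $s(n)$ is polynomial and absorbed into the $O^*$ notation. There is no real obstacle beyond the reduction itself; the delicate algebraic and derandomization work is all hidden inside Theorem~\ref{thm-1}. The only points requiring care are (i) preserving the formula property by duplicating the $Y$-subformula, and (ii) verifying the multilinear-versus-packing correspondence above, both of which are routine.
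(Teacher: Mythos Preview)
Your proposal is correct and follows essentially the same reduction as the paper: encode each set as a degree-$m$ monomial, take the $k$-th power of their sum, and observe that a degree-$mk$ multilinear monomial exists iff a $k$-packing exists, then invoke Theorem~\ref{thm-1} with $q=2$. Your explicit remark about duplicating the $Y$-subformula to preserve the formula property is a nice clarification the paper leaves implicit. One small quibble: your parenthetical claim that ``distinct tuples of disjoint sets produce distinct monomials over $U$'' is not actually true (two different $k$-packings can cover the same union), but this does not matter, since over the integers every monomial in the expansion of $Y^k$ has a positive coefficient and hence no cancellation can occur; the paper simply omits the point for this reason.
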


\begin{proof}
Let $U$ be a set with $n$ elements. Given any collection  $\mathcal{ S}$  of $\mbox{\em poly}(n)$ many subsets  of $U$
such that all the members in $\mathcal{ S}$ have the same size $m\ge 3$, we need to decide
whether there are $k$ members in $\mathcal{ S}$ such that those members are pairwise disjoint.

We view each element in $U$ as a variable $x_i$.
For any member $A = \{x_{i_1},x_{i_2},\ldots,$ $x_{i_m}\} \in S$,
let $\pi(A) = x_{i_1} x_{i_2}\cdots x_{i_m}$ be a monomial of $m$ variables in $A$.
Define
\begin{eqnarray}\label{packing-expa}
F(\mathcal{ S}, k) &=& \left(~\sum_{A \in \mathcal{ S}} \pi(A)\right)^k.
\end{eqnarray}

Assume that there exists an $m$-set $k$-packing consisting of
$A_1, A_2, \ldots, A_k$ from $\mathcal{S}$, where $A_i$'s are mutually disjoint. Then,
$\pi = \pi(A_1)\pi(A_2)\cdots \pi(A_k)$ is a monomial in $F(\mathcal{S}, k)$.
Since $|A_i| = m$ and $A_i$'s are mutually disjoint, $\pi$ is multiplinear and has a degree of $mk$.

Now, suppose that $F(\mathcal{S}, k)$ has a degree $mk$ multilinear monomial
$\phi$. According to expression (\ref{packing-expa}),
$\phi = \phi_1\phi_2 \cdots \phi_k$ such that
$\phi_i = \pi(B_i)$ for some member $B_i \in \mathcal{S}$.
 Since  $\phi$ is multilinear, $\phi_i$ and $\phi_j$ have no common variables, for $i\not= j$.
 This means that $B_1,B_2,\ldots,B_k$ are pairwise disjoint.
 Recall that all the members in $\mathcal{S}$ have the same size $m$.
 Thus,  $B_1,B_2,\ldots,B_k$ form an $m$-set $k$-packing for $\mathcal{S}$.

It follows from the above analysis that, in order to find an $m$-set $k$-packing  for $\mathcal{S}$,
we only need to test whether there is  a  degree $mk$ multilinear monomial in
the sum-product expansion of $F(\mathcal{S}, k)$.
Since $F(\mathcal{ S}, k)$ can be easily represented by a formula with a $\mbox{\em poly}(n,k)$ size and
$2$-monomials are the same as multilinear monomials,
by Theorem \ref{thm-1}, this can be done deterministically
in time $O^*(5.44^{mk})$.
\end{proof}

\subsection{The Matching Problem}

\begin{theorem}\label{matching-thm}
There is a deterministic algorithm to solve the $m$-dimensional $k$-matching problem in time
$O^*(5.44^{(m-1)k})$.
\end{theorem}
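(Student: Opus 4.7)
The plan is to reduce the $m$-dimensional $k$-matching problem to the monomial-testing problem solved by Corollary \ref{cor}, paying only $(m-1)k$ in the exponent by handling the first coordinate via an outer product rather than via variables. Let $U_1,\ldots,U_m$ be the given pairwise disjoint sets and $\mathcal{C} \subseteq U_1\times\cdots\times U_m$. Introduce a variable $x_v$ for every $v \in U_2\cup U_3\cup\cdots\cup U_m$ and one auxiliary variable $z$. For each tuple $c=(u_1,u_2,\ldots,u_m) \in \mathcal{C}$ set $\pi(c)=x_{u_2}x_{u_3}\cdots x_{u_m}$, and for each $u_1 \in U_1$ let $\mathcal{C}_{u_1}$ denote the set of tuples in $\mathcal{C}$ whose first coordinate is $u_1$. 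Define
\begin{eqnarray*}
F(\mathcal{C},k) &=& \prod_{u_1 \in U_1}\Bigl(1 + z\sum_{c \in \mathcal{C}_{u_1}} \pi(c)\Bigr).
\end{eqnarray*}
This is naturally a formula of $\mbox{\em poly}(n,k)$ size.

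Next I would verify the correspondence between $k$-matchings and monomials of the form $z^k\pi$ with $\pi$ a multilinear $x$-monomial of degree $(m-1)k$. If $c_1,\ldots,c_k \in \mathcal{C}$ form a matching, then their first components are distinct, so choosing the factor $z\cdot\pi(c_i)$ from the block indexed by the first component of $c_i$ and the constant $1$ from every other block produces $z^k\pi(c_1)\pi(c_2)\cdots\pi(c_k)$; mutual disjointness on the remaining coordinates makes $\pi(c_1)\cdots\pi(c_k)$ multilinear of degree $(m-1)k$. Conversely, a monomial $z^k\pi$ with $\pi$ multilinear of degree $(m-1)k$ in the sum-product expansion must arise by picking exactly $k$ distinct factors of the outer product, each contributing one tuple with a distinct first coordinate, and multilinearity of $\pi$ forces the remaining coordinates to be pairwise distinct across the chosen tuples; this recovers an $m$-dimensional $k$-matching.

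Finally, I would apply Corollary \ref{cor} with $q=2$, $t=k$, and $x$-degree $(m-1)k$ to $F(\mathcal{C},k)$, obtaining a deterministic $O^*\bigl(k^2\cdot 5.44^{(m-1)k}\bigr) = O^*\bigl(5.44^{(m-1)k}\bigr)$ algorithm that decides the existence of the desired monomial, and hence solves the matching problem within the claimed bound.

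The only non-routine step is really the design of $F$: the factor-of-$k$ saving over the packing construction hinges on enforcing the distinctness of first coordinates combinatorially (through the outer product and the $z$-counter) instead of algebraically (through extra variables that would contribute $k$ more to the multilinear degree). Everything else is a direct expansion argument and a black-box invocation of Corollary \ref{cor}.
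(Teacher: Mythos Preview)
Your proposal is correct and is essentially the same construction and argument as the paper's own proof: the paper also drops the first coordinate, partitions $\mathcal{C}$ by its $U_1$-component, and sets $F=\prod_j\bigl(1+\sum_{v\in\mathcal{C}_j}z\,\pi(v)\bigr)$, then reduces to Corollary~\ref{cor} with $t=k$ and $x$-degree $(m-1)k$. The two expositions differ only in notation.
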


\begin{proof}
Consider $U = U_1\times U_2\times \cdots \times U_m$ with mutully disjoint $U_i$'s.
For any given set $\mathcal{C} \subseteq U$ of $m$-component tuples, we need to decide
whether $\mathcal{C}$ contains a size $k$ subset $\mathcal{C'}$ such that
all the tuples in $\mathcal{C'}$ have no common components.

Following a reduction in \cite{koutis09}, we reduce the matching problem to
the multilinear monomial testing problem.
Let $U_i=\{u_{i1},u_{i2},\ldots,u_{in_i}\}$, $1\le i\le m$. For each element $u_{ij} \in U_i$, we design a variable
$x_{ij}$ to represent it. For any $m$-tuple $v=(u_{1j_1},u_{2j_2},\ldots,u_{mj_m})\in U$,
let $\pi(v) = x_{2j_2}\cdots x_{mj_m}$ be the monomial for $v$. Note that the variable $x_{1j_1}$ is intentionally
not used to construct $\pi(v)$. Obviously, $\pi(v)$ has degree $m-1$.
We partition $\mathcal{C}$ into $n_1$ pairwise disjoint
sets $\mathcal{C}_j$'s such that $\mathcal{C}_j$ is the set of all the tuples in $\mathcal{C}$ with the first component being
$u_{1j}$. Let $z$ be an additional {\em "marking"} variable.  Define
\begin{eqnarray} \label{matching-expa}
f(\mathcal{C}_j) & = & 1 + \sum_{v\in \mathcal{C}_j} z \pi(v),~~ 1\le j \le n_1, \\
F(\mathcal{C}, z) & =& f(\mathcal{C}_1) f(\mathcal{C}_2) \cdots f(\mathcal{C}_{n_1}). \label{matching-expb}
\end{eqnarray}

Assume that $\mathcal{C}$ has a size $k$ subset $\mathcal{M}$ with mutually disjoint tuples.
Let $\mathcal{M}_j = \mathcal{M} \cap \mathcal{C}_j$ and $t_j = |\mathcal{M}_j|$,  $1\le j\le n_1$. We have
$k = \sum_{j=1} t_j$.  Since tuples in $C_i$ share the same first component, 
we have $t_j=1$ or $0$, i.e., either $\mathcal{M}$ is empty or it has only one tuple. 
Let 
\begin{eqnarray} 
g(\mathcal{M}_j)  &=& 1 \mbox{ if }\mathcal{M}_j = \emptyset, \mbox{ or } \pi(v_{i_j}) \mbox{ if }
\mathcal{M}_j = \{v_{i_j}\}, \nonumber \\
G(\mathcal{M}) &=& g(\mathcal{M}_1) g(\mathcal{M}_2) \cdots g(\mathcal{M}_{n_1}). \nonumber 
\end{eqnarray}
Since tuples in $\mathcal{M}$ are mutually disjoint, it follows from
expression (\ref{matching-expa}) that $g(\mathcal{M}_j) = z^{t_j} \pi(\mathcal{M}_j)$, where
 $\pi(\mathcal{M}_j)$ is a degree $(m-1)t_j$ multilinear monomial of $x$-variables.
 Furthermore, by expression (\ref{matching-expb}),
 $G(\mathcal{M}) = \prod_{j=1}^{n_1} z^{t_j} \pi(\mathcal{M}_j)
 = z^k \prod_{j=1}^{n_1} \pi(\mathcal{M}_j)$, where $\prod_{j=1}^{n_1} \pi(\mathcal{M}_j)$  is multilinear
 and its degree  is $\sum_{j=1}^{n_1}[(m-1)t_j] = (m-1)\sum_{j=1}^{n_1}t_j = (m-1)k$.
 Let $\pi(\mathcal{M}) = \prod_{j=1}^{n_1} \pi(\mathcal{M}_j)$.
Thus, $F(\mathcal{C}, z)$ has a monomial $z^k\pi(\mathcal{M})$ such that
$\pi(\mathcal{M})$ is a degree $(m-1)k$ multilinear monomial of $x$-variables.

On the other hand, assume that $F(\mathcal{C},z)$ has, in its sum-product expansion, a monomial $z^k\phi$ such that
$\phi$ is a degree $(m-1)k$ multilinear monomial of $x$-variables. Then, by
expressions  (\ref{matching-expa}) and (\ref{matching-expb}),
 $\phi = \phi_1\phi_2 \cdots \phi_{n_1}$ with $\phi_j $
 being a monomial from $f(\mathcal{C}_{j})$ for $1\le j \le n_1$.
 Furthermore, by expression (\ref{matching-expa}), $\phi_j$ is either $1$ or
$z \pi(v_{i_j})$ for some tuple $v_{i_j} \in \mathcal{C}_j$. In the latter case,
the degree of $\phi_j$ is $1 + (m-1)$.
 Let $S$ be the set of all the $v_{i_j}$'s such that $\phi_j \not= 1$ and $\phi_j = z\pi(v_{i_j})$
 for some $v_{i_j} \in \mathcal{C}_j$. Let $\ell = |S|$.
 Then,
 $$
 z^k\phi = \prod_{v_{i_j}\in S} z\pi(v_{i_j}),
 $$
 and the degree of $z^t\phi$ is
 $$
 k + (m-1)k = (1 + (m-1))\ell = \ell + (m-1)\ell.
 $$
  This implies that $\ell = k$. Hence, it is easy to see that $S$ is an $m$-dimensional $k$-matching
  for $\mathcal{C}$.

Combining the above analysis, in order to decide whether there exists an $m$-dimensional $k$-matching for $\mathcal{C}$,
we only need to  test  whether there is  a monomial $z^k\pi$ in the sum-product expansion of $F(\mathcal{C}, z)$
such that $\pi$ is a degree $(m-1)k$ multilinear monomial of $x$-variables.
By expressions (\ref{matching-expa}) and (\ref{matching-expb}), 
$F(\mathcal{C},z)$ can be represented by a formula with a $\mbox{\em poly}(n, m)$ size, 
where $n=n_1+n_2+\cdots+n_m$. Recall that $2$-monomials are the same as multilinear monomials.
By Corollary \ref{cor}, the testing can be done deterministically
in  $O^*(5.44^{(m-1)k})$ time.
\end{proof}

\subsection{The $t$-Dominating Set Problem}

\begin{theorem}\label{dominating-thm}
There is a deterministic algorithm for solving the $t$-dominating set problem in time $O^*(5.44^t)$.
\end{theorem}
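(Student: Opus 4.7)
The plan is to follow the pattern of the packing and matching proofs and reduce the $t$-dominating set problem to the monomial testing of Corollary~\ref{cor}. Since every vertex lies in its own closed neighborhood, any $t$-element vertex set $S \subseteq V$ already witnesses $|N[S]| \geq t$, so the optimum $k^{*}$ satisfies $1 \le k^{*}\leq t$. The algorithm iterates $k = 1, 2, \ldots, t$; for each $k$, it builds the formula
\[
F_k(x, z) \;=\; \prod_{v \in V}\!\left(1 + z \prod_{u \in N[v]}(1 + x_u)\right),
\]
which is clearly of $\mathrm{poly}(n)$ size, and returns the smallest $k$ for which the sum-product expansion of $F_k$ contains a monomial of the form $z^{k}\pi$, where $\pi$ is a multilinear (equivalently, $2$-) monomial of the $x$-variables with degree exactly $t$.

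The main step is to prove the equivalence: $F_k$ admits such a monomial in its expansion if and only if there exists $S \subseteq V$ with $|S| = k$ and $|N[S]| \geq t$. For the forward direction I will start from a witness $S$, pick an arbitrary $U \subseteq N[S]$ with $|U| = t$, and partition $U$ into $(T_v)_{v \in S}$ with $T_v \subseteq N[v]$ by sending each $u \in U$ to some $v \in S$ that dominates it; then selecting the $z\prod_{u\in N[v]}(1+x_u)$ summand from each factor indexed by $v \in S$, the $1$ summand from the remaining factors, and the sub-monomial $\prod_{u \in T_v} x_u$ inside each active factor contributes $z^{k}\prod_{u \in U} x_u$ to $F_k$. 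For the converse, any occurring such monomial pins down a size-$k$ subset $T \subseteq V$ (the factors supplying the nontrivial summand) together with a family $(T_v)_{v \in T}$ partitioning $U$ with $T_v \subseteq N[v]$; this immediately gives $U \subseteq N[T]$, so $T$ is the desired dominating set.

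Given this equivalence, Corollary~\ref{cor} applied with $q = 2$ performs the test for each fixed $k$ in time $O^{*}(k^{2}\cdot 5.44^{t}\cdot s^{4}(n))$. Because $s(n) = O(n^{2})$ and the loop over $k$ runs at most $t$ times, the total running time is $O^{*}(5.44^{t})$, matching the theorem.

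The one subtlety I anticipate is confirming that the coefficient of $z^{k}\prod_{u\in U}x_u$ in the formal expansion of $F_k$ is genuinely nonzero whenever the combinatorial solution exists, since Corollary~\ref{cor} detects monomials appearing in the sum-product expansion. A direct expansion shows this coefficient equals
\[
\sum_{T \subseteq V,\ |T| = k}\ \prod_{u \in U} \bigl|\{v \in T : u \in N[v]\}\bigr|,
\]
which is a nonnegative integer, strictly positive exactly when some $T$ of size $k$ dominates $U$. Hence the integer positivity required by the corollary is already implied by the existence of a $k$-vertex set dominating $t$ vertices, closing the argument.
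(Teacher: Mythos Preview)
Your argument is correct but uses a different polynomial encoding than the paper. The paper, following Koutis--Williams, takes
\[
F_k \;=\; \Bigl(\sum_{v\in V}\ \prod_{u\in N[v]}(1+zx_u)\Bigr)^{k}
\]
and looks for a monomial $z^{t}\pi$ with $\pi$ multilinear of degree $t$; here the marking variable $z$ tags each \emph{dominated} vertex, and the $k$-th power selects (possibly repeating) $k$ centers. You instead take the single polynomial $F=\prod_{v\in V}\bigl(1+z\prod_{u\in N[v]}(1+x_u)\bigr)$ and look for $z^{k}\pi$ with $\pi$ multilinear of degree $t$; now $z$ tags each \emph{center}, and the outer product over $V$ automatically forces the chosen centers to be distinct. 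Both reductions feed into Corollary~\ref{cor} with the exponent on $5.44$ equal to the degree of the multilinear part, namely $t$, so the running times coincide. What your route buys is a cleaner coefficient analysis---your formula $\sum_{|T|=k}\prod_{u\in U}|\{v\in T:u\in N[v]\}|$ is manifestly a nonnegative integer, so cancellation is impossible---and a polynomial that does not change with $k$ (your subscript on $F_k$ is vestigial; the formula itself is independent of $k$). The paper's route stays closer to the Koutis--Williams template, which makes the analogy with the matching reduction more visible. Either way the loop over $k\le t$ and the $\mathrm{poly}(n)$ formula size give $O^*(5.44^{t})$.
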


\begin{proof}

Given any simple undirected graph $G=(V,E)$,
we need to find a minimal $k$ such that there exists a size $k$ set $S\subseteq V$
that dominates at least $t$ distinct nodes.
We first consider the decision version of this problem: For any two given parameters $k$ and $t$,
we decide whether there is a set
set $S\subseteq V$ such that $S$ has a size of at most $k$ and dominates at least $t$ distinct nodes.
Like in \cite{koutis09}, we will reduce this decision version of the $t$-dominating set problem
to the multilinear monomial testing problem.

Let $V = \{v_1,v_2,\ldots,v_n\}$. For each node $v_i$ we design a variable $x_i$ to represent it.
We also use $N(v_i)$ to denote the set of nodes adjacent to $v_i$, i.e.,
$N(v_i) = \{v_j| (v_i, v_j) \in E\}$. Since $G$ is undirected, the edge $(v_i, v_j) = (v_j, v_i)$.
Also, since $G$ is simple, $v_i\not\in N(v_i)$.
Let $z$ be an additional {\em "marking"} variable to mark the desired monomials
that will be given in later discussions. For any given integer $k\ge 1$, define
\begin{eqnarray}\label{dominating-expa}
&& f(v_i)  =  (1 + zx_i) \prod_{v_j \in N(v_i)}(1+z x_j), \mbox{ for any } v_i\in V,\\
&& F_k(x_1,x_2,\ldots,x_n,z)  =  \left(\sum_{v_i\in V} f(v_i)\right)^k,~~ k\ge 1. \label{dominating-expb}
\end{eqnarray}

Assume that there is a set $S\subseteq V$ with size $k$ that dominates at least $t$ nodes. Let
$D = S \cup \{N(v_i)|v_i\in S\}$. Then, $D$ is dominated by $S$, hence $|D| \ge t$.
We select a set $W = \{x_{j_1},x_{j_2},\ldots,x_{j_t}\}$ of $t$ distinct nodes from $D$.
Since $W$ is dominated by $S$, we can partition $W$ into $s\le k$ mutually disjoint sets
$W_1,W_2,\ldots,W_s$ that are dominated by $s$~ distinct nodes $v_{\ell_1},v_{\ell_2},\ldots,v_{\ell_s}$ in $S$, respectively.
Let $S' = \{ v_{\ell_1},v_{\ell_2},\ldots,v_{\ell_s}\}$.  Let
\begin{eqnarray}
g(v_i) &=& \left\{
\begin{array}{ll}
1, & \mbox{ if } v_i \not\in S', \\
\prod_{x_{j_\iota} \in W_{\tau}} z x_{j_\iota}, & \mbox{ if } v_i = v_{\ell_\tau} \in S'.
\end{array}
\right. \nonumber \\
g(S) & = & \prod_{v_i\in S} g(v_i). \nonumber
\end{eqnarray}
Then,
\begin{eqnarray}
g(S) & = & z x_{j_1} z x_{j_2} \cdots z x_{j_t} = z^t x_{j_1} x_{j_2} \cdots x_{j_t}. \nonumber
\end{eqnarray}
Furthermore, from expressions (\ref{dominating-expa}) and (\ref{dominating-expb}),
$g(v_i)$ is a monomial in the sum-product expansion of $f(v_i)$, 
so is $g(S)$ in the sum-product expansion of $F_k$. Because
$x_{j_1}, x_{j_2}, \ldots x_{j_t}$ are distinct, their product is a degree $t$ multilinear monomial.
Hence, $F_k$ has a monomial $z^t \pi$ in its sum-product expansion with $\pi$ being a degree $t$ multilinear
monomial of $x$-variables.

Now, assume that there is a monomial $\phi = z^t\pi$ in the sum-product expansion of $F_k$
such that $\pi$ is mutilinear with degree $t$. According to expressions  (\ref{dominating-expa}) and (\ref{dominating-expb}),
 $\phi = \phi_1\phi_2 \cdots \phi_k$ with $\phi_i = z^{t_i}\pi_i $
 being a monomial from $f(v_{i_j})$ for distinct $k$ nodes $v_{i_1},v_{i_2},\ldots, v_{i_k}$.
 Let $S' = \{v_{i_j} | \phi_i \not= 1\}$. For any $v_{i_j} \in S'$,
 let $N'(v_{i_j}) = \{x_{\ell_\tau} |   x_{\ell_\tau} \mbox{ is in } \pi_i \}$.
 Since each $z$-variables is paired with a $x$-variable in $f(v_{i_j})$ according to expression (\ref{dominating-expa}),
 the degree of $\phi_i$ is $t_i = |N'(v_{i_j})|$.
 Also, by expression (\ref{dominating-expa}), nodes in $N'(v_{i_j})$ are dominated by $v_{i_j} \in S'$.
 Since $\pi$ is multilinear,
 the total number of nodes dominated by nodes in $S'$ is at least
 $$
 |\{N'(v_{i_j}) | v_{i_j}\in S'\}| = t_1 + t_2\cdots + t_k = t.
 $$
 Therefore, there is a set of nodes with size at most $k$ to dominate at least $t$ nodes.

Following the above analysis, in order to decide whether there exists a node set of size  $k$
that dominates at least $t$ nodes,
we only need to test whether $F_k$ has a monomial $z^t\pi$ such that $\pi$ is
a degree $t$ multilinear monomial of $x$-variables in
the sum-product expansion of $F_k$.  By expressions (\ref{dominating-expa}) and (\ref{dominating-expb}),
$F_k(x_1,x_2,\ldots,x_n,z)$ can be easily represented by a formula of $\mbox{\em poly}(n,k)$ size.
Because $2$-monomials are the same as multilinear mnonomials,
by Corollary \ref{cor}, this can be done deterministically in time $O^*(5.44^t)$.

Finally, we can use the algorithm for the decision version of the $t$-dominating set problem
to find the minimal $k$ as follows: For $k=1,2,\ldots,t$, run the algorithm. If it says {\em "yes"}
for the first $k$, then stop and return this $k$. With those additional efforts, the total time complexity
bound remains as $O^*(5.44^t)$.
\end{proof}

\section{Faster Multiplication of Group Algebraic Elements}

For $0\le i\le 2^k -1$,  let $\vec{v}_i$ denotes the vector in $Z_2^k$ that corresponds to the $k$-bit binary representation of $i$.
E.g., when $k=4$, $\vec{v}_2 = (0,0,1,0)^T$ and $\vec{v}_9 = (1,0,0,1)^T$.
Recall from Section \ref{def-section} that $\vec{v}_i \cdot \vec{v}_j$ is the 
component-wise addition $(\mbox{\em mod}~ 2)$ of the two vectors. We use $\vec{v}_i \times \vec{v}_j$ to denote the inner product of
 $\vec{v}_i$ and $\vec{v}_j$. 

Given two elements $\vec{x}$ and $\vec{y}$ in the group algebra $\mathcal{  F}(Z^k_2)$, let
$\vec{x} = \sum_{i=0}^{2^k - 1} a_i \vec{v}_i$ and $\vec{y} = \sum_{i=0}^{2^k - 1} b_i \vec{v}_i.$
Also, $\vec{x}$ and $\vec{y}$ can be represented as $2^k$-dimensional vectors
$A(\vec{x})$ and $A(\vec{y})$ such that $A(\vec{x}) = (a_0,a_1,\ldots,a_{2^k -1})^T$ and
$A(\vec{y}) = (b_0,b_1,\ldots,b_{2^k -1})^T$.
 It is easy to know that
\begin{eqnarray}\label{xy}
\vec{x}\vec{y} &=& \sum_{t=0}^{2^k -1} \left(~\sum_{\vec{v}_i \cdot \vec{v}_j = \vec{v}_t} a_i b_j~\right) \vec{v}_{t}.
\end{eqnarray}

Naively, the time needed to calculate the above multiplication is $O(4^k \log^2|\mathcal{  F}|)$.
Unfortunately, this is inefficient for designing our monomial testing algorithm in Section \ref{FDTM-section}.
Williams \cite{williams09} briefly mentioned that Walsh-Hadamard transformation can be used
to speed up the multiplication to $O(k2^k \log^2|\mathcal{  F}|)$.
Although discrete Fourier transformations are commonly used to multiply two polynomials,
yet the multiplication of two group algebraic elements is substantially different from that of two polynomials.
Hereby, it is necessary to give a formal justification about why Walsh-Hadamard transformation
can be used for computing $\vec{x}\vec{y}$  as in expression (\ref{xy}).

Let $H_k$ denote the $2^k\times 2^k$ Walsh-Hadamard matrix. Here, we do not consider matrix normalization,
so entries in $H_k$ are either $1$ or $-1$. It is known that the
entry of $H_k$ at row $i$ and column $j$ is
\begin{eqnarray}\label{wh-exp1}
h_{ij} & = & (-1)^{\vec{v}_i \times \vec{v}_j}.
\end{eqnarray}

\begin{lemma}\label{wh-lem1}
for any given $0\le j \le 2^k-1$, we have
\begin{eqnarray}\label{wh-exp2}
\sum_{i=0}^{2^k-1} (-1)^{\vec{v}_i \times \vec{v}_j} &=&
\left\{\begin{array}{ll}
2^k, & \mbox{ if $j=0$,}\\
0, & \mbox{ if $j\not= 0$}.
\end{array}
\right.
\end{eqnarray}
\end{lemma}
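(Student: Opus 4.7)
The plan is to establish this identity by direct summation, treating the two cases $j=0$ and $j \neq 0$ separately, and exploiting the fact that as $i$ ranges over $\{0,1,\ldots,2^k-1\}$, the vector $\vec{v}_i$ ranges over every element of $Z_2^k$ exactly once.

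First I would dispatch the trivial case $j=0$. Here $\vec{v}_0$ is the zero vector, so $\vec{v}_i \times \vec{v}_0 = 0$ for every $i$, and therefore the sum is simply $\sum_{i=0}^{2^k-1} 1 = 2^k$.

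Next, for $j \neq 0$, the key step is to rewrite the inner product as a sum of bit products and then factor the exponential sum coordinate-wise. Writing $\vec{v}_i = (x_1,\ldots,x_k)^T$ and $\vec{v}_j = (y_1,\ldots,y_k)^T$, we have $\vec{v}_i \times \vec{v}_j = \sum_{r=1}^{k} x_r y_r$, so that
\begin{eqnarray}
\sum_{i=0}^{2^k-1} (-1)^{\vec{v}_i \times \vec{v}_j}
&=& \sum_{(x_1,\ldots,x_k)\in \{0,1\}^k} \prod_{r=1}^{k} (-1)^{x_r y_r} \nonumber \\
&=& \prod_{r=1}^{k}\left(\sum_{x_r\in\{0,1\}} (-1)^{x_r y_r}\right). \nonumber
\end{eqnarray}
For each coordinate $r$, the inner sum equals $2$ when $y_r=0$ and equals $1+(-1)=0$ when $y_r=1$. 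Since $\vec{v}_j \neq \vec{v}_0$, at least one $y_r$ equals $1$, so at least one factor in the product vanishes and the whole product is $0$, as claimed.

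I do not expect a genuine obstacle: this is the standard character orthogonality identity on the group $Z_2^k$, and the only substantive move is recognizing that the exponent $\vec{v}_i \times \vec{v}_j$ splits as a sum over coordinates, which turns the exponential sum into a product of one-bit sums. The slight care needed is simply to justify factoring the sum over $\{0,1\}^k$ into a product over the $k$ coordinates, which is immediate from distributivity.
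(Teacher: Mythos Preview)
Your proof is correct. It differs from the paper's argument, though: the paper simply observes, via the identity $h_{ij}=(-1)^{\vec v_i\times\vec v_j}$, that the sum in question is the sum of the entries in column $j$ of the Walsh--Hadamard matrix $H_k$, and then invokes the well-known fact that this column sum equals $2^k$ when $j=0$ and $0$ otherwise. Your argument instead computes the sum from scratch by factoring the exponential over the $k$ coordinates. The paper's version is shorter but leans on a property of $H_k$ that is essentially equivalent to the lemma itself; your version is self-contained and actually establishes that column-sum property directly, at the modest cost of writing out the factorization.
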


\begin{proof}
According to (\ref{wh-exp1}),
$\sum_{i=0}^{2^k-1} (-1)^{\vec{v}_i \times \vec{v}_j} = \sum_{i=0}h_{ij}.$
The right part is the sum of all the entries at column $j$ of $H_k$, which is $2^{k}$ for the first column when
$j=0$ or $0$ for any other column when $j\not= 0$.
\end{proof}

We perform the following Walsh-Hadamard transformation:
\begin{eqnarray}
\vec{X} & = & H_k~ A(\vec{x}), \nonumber \\
\vec{Y} & = & H_k~ A(\vec{y}), \nonumber \\
\vec{Z} &=& \vec{X} \times \vec{Y},  \nonumber \\
\vec{W} &=& \frac{1}{2^k}~ H_k~ \vec{Z}. \nonumber
\end{eqnarray}

\begin{lemma}\label{wh-lem2}
Let $\vec{W} = (W_0,W_1,\ldots,W_{2^k-1})^T$. The above transformation yields
\begin{eqnarray}\label{wh-lem-exp}
\vec{x}\vec{y} = \sum_{t=0}^{2^k-1} W_t \vec{v}_t, \mbox{  i.e., }  W_t = \sum_{\vec{v}_i \cdot \vec{v}_j = \vec{v}_t} a_i b_j,
\end{eqnarray}
and the time needed to complete the transformation is $O^*(k2^k \log^2|\mathcal{  F}|)$.
\end{lemma}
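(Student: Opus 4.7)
The plan is to verify (\ref{wh-lem-exp}) by direct expansion of the three transforms and then collapse the resulting triple sum using the orthogonality relation of Lemma \ref{wh-lem1}. First I would use (\ref{wh-exp1}) to write
\[
X_i = \sum_{\ell=0}^{2^k-1} (-1)^{\vec{v}_i \times \vec{v}_\ell}\, a_\ell, \qquad Y_i = \sum_{m=0}^{2^k-1} (-1)^{\vec{v}_i \times \vec{v}_m}\, b_m,
\]
so that the componentwise product entries are
\[
Z_i = X_i Y_i = \sum_{\ell,m} (-1)^{(\vec{v}_i \times \vec{v}_\ell) + (\vec{v}_i \times \vec{v}_m)}\, a_\ell b_m.
\]
Then, computing $W_t = \frac{1}{2^k} \sum_{i} (-1)^{\vec{v}_t \times \vec{v}_i} Z_i$ and interchanging the order of summation gives
\[
W_t = \frac{1}{2^k} \sum_{\ell, m} a_\ell b_m \sum_{i=0}^{2^k-1} (-1)^{(\vec{v}_i \times \vec{v}_t) + (\vec{v}_i \times \vec{v}_\ell) + (\vec{v}_i \times \vec{v}_m)}.
\]

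The step I expect to be most delicate is turning the three characters inside the inner sum into a single character of the shape appearing in Lemma \ref{wh-lem1}. The key is the bilinearity identity
\[
(\vec{v}_i \times \vec{v}_t) + (\vec{v}_i \times \vec{v}_\ell) + (\vec{v}_i \times \vec{v}_m) \equiv \vec{v}_i \times (\vec{v}_t \cdot \vec{v}_\ell \cdot \vec{v}_m) \pmod{2},
\]
which holds because coordinatewise $Z_2$-multiplication distributes over XOR and $(-1)^{\cdot}$ sees only parity. Writing $\vec{v}_s = \vec{v}_t \cdot \vec{v}_\ell \cdot \vec{v}_m$, the inner sum becomes $\sum_i (-1)^{\vec{v}_i \times \vec{v}_s}$, which by Lemma \ref{wh-lem1} equals $2^k$ when $\vec{v}_s = \vec{v}_0$ and $0$ otherwise. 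Because every element of $Z_2^k$ is its own inverse, $\vec{v}_s = \vec{v}_0$ is equivalent to $\vec{v}_\ell \cdot \vec{v}_m = \vec{v}_t$, so
\[
W_t = \sum_{\vec{v}_\ell \cdot \vec{v}_m = \vec{v}_t} a_\ell b_m,
\]
which is exactly the coefficient of $\vec{v}_t$ in $\vec{x}\vec{y}$ prescribed by (\ref{xy}).

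For the time bound I would invoke the standard butterfly factorization $H_k = H_1 \otimes H_{k-1}$, which allows each of the three Walsh-Hadamard multiplications to be carried out with $O(k 2^k)$ field additions and sign flips; the componentwise product $\vec{Z}$ and the final scaling by $1/2^k$ contribute only $O(2^k)$ further field multiplications. Since each field operation costs $O(\log^2|\mathcal{F}|)$ bit operations, the overall cost is $O(k 2^k \log^2|\mathcal{F}|)$, with the $\mathrm{poly}(k)$ overhead absorbed into the $O^*$ notation. Aside from the bilinearity identity above, every step is a routine change of summation order followed by the column-orthogonality already established in Lemma \ref{wh-lem1}.
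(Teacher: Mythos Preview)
Your proposal is correct and follows essentially the same route as the paper: expand the three transforms, interchange summation order, use bilinearity of the $Z_2$ inner product to collapse the three exponents into $\vec{v}_i \times (\vec{v}_t \cdot \vec{v}_\ell \cdot \vec{v}_m)$, and then invoke Lemma~\ref{wh-lem1}. The only cosmetic difference is that the paper cites a reference for the $O(k2^k)$ fast Walsh--Hadamard transform whereas you spell out the tensor factorization $H_k = H_1 \otimes H_{k-1}$; the substance is identical.
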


\begin{proof}
Let the $i$-th component of $\vec{X}$, $\vec{Y}$ and $\vec{Z}$ be $X_i$, $Y_i$ and $Z_i$, respectively.
For $0\le t\le 2^k -1$, according to the above transformation,
\begin{eqnarray}\label{wh-exp3}
W_t & = & \frac{1}{2^k} \sum_{\ell =0}^{2^k-1} h_{t\ell}Z_{\ell}
     =  \frac{1}{2^k} \sum_{\ell =0}^{2^k-1} h_{t\ell}X_{\ell}Y_{\ell} \nonumber \\
     &=& \frac{1}{2^k} \sum_{\ell =0}^{2^k-1} h_{t\ell} \left(\sum_{i =0}^{2^k-1} h_{\ell i}a_i \right)\left(\sum_{j =0}^{2^k-1} h_{\ell j} b_j\right) \nonumber \\
    & = & \frac{1}{2^k} \sum_{i =0}^{2^k-1} \sum_{j =0}^{2^k-1} a_i b_j \left(\sum_{\ell =0} ^{2^k -1} h_{t\ell}h_{\ell i}h_{\ell j}\right)
\end{eqnarray}

By (\ref{wh-exp1}) and (\ref{wh-exp2}),
\begin{eqnarray}\label{wh-exp4}
\sum_{\ell =0} ^{2^k -1} h_{t\ell}h_{\ell i}h_{\ell j}
& = &  \sum_{\ell =0} ^{2^k -1}
(-1)^{\vec{v}_t \times \vec{v}_{\ell} + \vec{v}_{\ell} \times \vec{v}_{i} + \vec{v}_j \times \vec{v}_{\ell}} \nonumber \\
&= &  \sum_{\ell =0}^{2^k -1} (-1)^{\vec{v}_{\ell} \times (\vec{v}_t  \cdot \vec{v}_i \cdot \vec{v}_j)} \nonumber \\
&=&\left\{
\begin{array}{ll}
2^k, & \mbox{ if $\vec{v}_t  \cdot \vec{v}_i \cdot \vec{v}_j = \vec{v}_0$,}\\
0, & \mbox{ otherwise}.
\end{array}
\right.
\end{eqnarray}

Note that $\vec{v}_t  \cdot \vec{v}_i \cdot \vec{v}_j = \vec{v}_0$ if and only of
$\vec{v}_t = \vec{v}_i \cdot \vec{v}_j$. Hence, by (\ref{wh-exp3}) and (\ref{wh-exp4}),
\begin{eqnarray}\label{wh-exp5}
W_t & = & \sum_{\vec{v}_i+\vec{v}_j = \vec{v}_t} a_i b_j.
\end{eqnarray}
Therefore, (\ref{wh-lem-exp}) follow from  (\ref{wh-exp5}).

It is known from the property of Walsh-Hadamard transformation \cite{maslen95}
that $\vec{X}$, $\vec{Y}$ and $\vec{W}$ can be computed in $O(k2^k \log^2 |\mathcal{  F}|)$ time.
Computing $\vec{Z}$ takes $O(2^k \log^2 |\mathcal{  F}|)$. Thus, the total time to compute $\vec{W}$ (hence $\vec{x}\vec{y}$)
is $O(k2^k \log^2 |\mathcal{  F}|)$.
\end{proof}

\section{A Special Case of Read-Once Formulas}

Recently, many efforts have been made on the study of polynomial identity testing for formulas,
in particular read-once formulas (e.g., \cite{shpilka08,raz05}).
Here, we consider a special case of read-once formulas, called $S$-read-once formulas.
Given a formula $\mathcal{ C}$, we say $\mathcal{ C}$ is $S$-read-once,
if it is read-once (i.e., each variable appears once in the circuit), 
its underlying graph of $\mathcal{C}$ including all the terminal nodes is a tree, and
 each terminal node in it is connected to a $\times$ gate $g$ such that $g$ have one value input $\alpha$
 from a field $\mathcal{ F}$ and the other variable input $x$.
The output of $g$ is $\alpha x$. We need to deal with $S$-read-once formula in Section \ref{FDTM-section}.
A simple and direct algorithm for polynomial identity testing for this type of formulas will help us
design our new monomial testing algorithm.

\begin{lemma}\label{pit-lem}
For any $n$-variate polynomial $F(x_1,x_1,\ldots,x_n)$ that is represented by an $S$-read-once formula
$\mathcal{  C}$ with size $s(n)$, we can test whether $F$ is identically zero or not in time $O(s^4(n)log |\mathcal{F}|)$.
Here, the coefficients of $F$ are in a field $\mathcal{F}$.
\end{lemma}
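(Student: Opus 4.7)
The plan is to exploit the tree structure of $\mathcal{C}$ together with the read-once property. Since $\mathcal{C}$ is $S$-read-once, its underlying graph (with terminals included) is a tree and each variable occurs at exactly one leaf; consequently, the subpolynomials computed at any two sibling subtrees depend on pairwise disjoint sets of variables. This disjointness is what makes identity testing ``local'' at each gate.

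The algorithm will process $\mathcal{C}$ bottom-up, computing at each gate $g$ a pair $(\mathrm{isConst}_g, c_g)$ recording whether the polynomial $F_g$ computed at $g$ is identically some scalar $c_g \in \mathcal{F}$. The recursion has three cases. At a leaf computing $\alpha x$, $F_g$ is constant iff $\alpha = 0$, with $c_g = 0$. At a $+$ gate with children $g_1,\ldots,g_r$, $F_g$ is constant iff every child is constant, in which case $c_g = \sum_i c_{g_i}$; the justification is that any nonconstant monomial of some $F_{g_i}$ must involve a variable exclusive to that child's subtree and so cannot be canceled by the other summands. At a $\times$ gate with children $g_1, g_2$, $F_g$ is constant iff either some child is identically $0$ or both children are constant; the value is $c_{g_1}c_{g_2}$, read as $0$ when either factor is $0$. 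Indeed, if $F_{g_1}$ and $F_{g_2}$ are both nonzero and at least one is nonconstant, then picking a nonconstant monomial from that side and any nonzero monomial from the other yields a nonconstant monomial in $F_{g_1}F_{g_2}$ that cannot be cancelled.

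With this characterization in hand, the identity test is simply to traverse $\mathcal{C}$ in post-order, apply the three rules at each gate, and return \textbf{yes} exactly when the root gate is recorded as constant with value $0$. The total number of field operations is $O\!\left(\sum_g \mathrm{fanin}(g)\right) = O(s(n))$, each costing at most polylogarithmic time in $|\mathcal{F}|$, so the whole procedure fits comfortably inside the claimed $O(s^4(n)\log|\mathcal{F}|)$ budget with plenty of slack for the tree-walk bookkeeping and any intermediate coefficient handling.

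The main obstacle is proving the structural lemma that a sum of polynomials on pairwise disjoint variable sets is identically constant only when each summand is identically constant. This is a short monomial-comparison argument in the sum-product expansion, but it is the heart of the proof: it is what makes the $+$-gate rule sound and, via the analogous argument for products, underwrites the $\times$-gate rule as well.
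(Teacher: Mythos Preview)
Your argument is correct and in fact considerably simpler than the paper's. The paper proceeds by a Raz--Shpilka-style reduction: after collapsing each leaf to $\alpha x$, it repeatedly picks a $\times$ gate $g$ whose two inputs are linear forms, traces the path from $g$ to the root to write $F \equiv F_1 + f(g)\,F_2$ with $F_1,F_2,f(g)$ on disjoint variable sets, and then replaces $g$ by $d\cdot x$ for a fresh variable $x$, where $d$ is a ``greatest common divisor'' of the cross-coefficients $\alpha_{1_j}\alpha_{2_\ell}$; the $O(s^4(n)\log|\mathcal{F}|)$ bound comes from computing these $O(s^2(n))$ products and their gcd at each of the $O(s(n))$ gates. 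Your approach sidesteps all of this by observing that the disjoint-variable property of sibling subtrees forces the identity test to factor through a local invariant (is the subpolynomial a scalar, and which one), computable in a single post-order pass. This is both cleaner and tighter---you get $O(s(n))$ field operations instead of $O(s^4(n))$---and it avoids the awkward gcd step over a field, where every nonzero element is a unit. Either proof suffices for the lemma's later use, since only the existence of a $\mathrm{poly}(s(n))$-time test is needed; your route simply has less overhead.
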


\begin{proof}
Our approach is motivated by the polynomial identity testing algorithm of
Raz and Shpilka \cite{raz05} for pure circuits. We shall reconstruct $\mathcal{C}$
by reducing its nodes in the following two steps:

{\bf Step 1.} We start with the nodes that are parent nodes of terminal nodes.
Since $\mathcal{  C}$ is $S$-read-once, each of such a node $g$ is a $\times$ gate with two input nodes representing
a value input $\alpha \in \mathcal{  F}$ and a variable input $x$. We delete the two input nodes for $g$ and replace $g$ with
a leaf node that represent $\alpha x$. We complete this type of deletion and replacement for all such nodes $g$'s.
Obviously the resulting circuit, denoted as $\mathcal{  C}'$,   is equivalent to the original circuit $\mathcal{  C}$.

{\bf Step 2.} Now, we consider each node $g$ in $\mathcal{  C'}$ that has only leaf nodes as its children.
If $g$ is a $+$ gate with input leaf nodes $g_1, g_2, \ldots, g_t$, then we delete $g_1, g_2, \ldots, g_t$ and replace
$g$ with $g_1 + g_2 + \cdots g_t$.
Note that $g_i$ is a linear sum of variables.  Hence, $g$ is replaced by a new linear sum of variables.
Again, it is easy to see that the resulting circuit, still denoted as $\mathcal{  C}'$, is equivalent to the original circuit.

When $g$ is a $\times$ gate with two input leaf nodes $g_1$ and $g_2$, we know that
$g_i = \sum_{j=1}^{n_i} \alpha_{i_j} x_{i_j}$, for $i=1,2$. The output $f(g)$ of $g$ is
\begin{eqnarray}\label{pit-expa}
f(g) & = & \sum_{j=1}^{n_1} \alpha_{1_j} x_{1_j} \sum_{\ell=1}^{n_2} \alpha_{2_\ell} x_{2_\ell} \nonumber \\
     & = & \sum_{j=1}^{n_1} \sum_{\ell=1}^{n_2} \alpha_{1_j}\alpha_{2_\ell} x_{1_j}x_{2_\ell}
\end{eqnarray}

Also, in the circuit $\mathcal{  C'}$, mark the path $L$ from $g$ to the root gate. Starting at the parent of $g$,
find the first $+$ gate along the path $L$ and denoted it  as $M$. Let $w_1, w_2, \ldots, w_{t}$ be the consecutive $\times$ gates
from the parent of $g$ to the gate $M$ on the path $L$. Let $u_i$ be the other input to $w_i$ that is not on $L$.
Let $F_1$ be the polynomial computed by the circuit obtained from $\mathcal{  C'}$ by deleting the gates $g$,
$w_1$, $w_2, \ldots, w_t$. We continue to find more $\times$ gates after $M$ on $L$. Let $h_1, \ldots, h_m$ be the list of
all the $\times$ gates after $M$ on $L$. Let $o_i$ be the other input to $h_i$ that is not on $L$.
Let $F_2 = u_1 u_2 \cdots u_t o_1 \cdots o_m$.
Since $\mathcal{  C}$ (and hence $\mathcal{  C'}$) is $S$-read-once, we have by expression (\ref{pit-expa})
\begin{eqnarray}\label{pit-expb}
F &\equiv& F_1 + f(g) F_2 \nonumber \\
  & = & F_1 +  \sum_{j=1}^{n_1} \sum_{\ell=1}^{n_2} x_{1_j}x_{2_\ell} (\alpha_{1_j}\alpha_{2_\ell} F_2),
\end{eqnarray}
and $F_1, F_2$ and $f(g)$ do not have any common variables. Therefore, by expression (\ref{pit-expb}), we have
\begin{eqnarray}\label{pit-expc}
F \equiv 0
\Longleftrightarrow F_1 = 0 \mbox{ and } \alpha_{1_j}\alpha_{2_\ell} F_2 = 0, 1\le j\le n_1, 1\le \ell \le n_2.
\end{eqnarray}
Let $d$ be the greatest common divisor of $\alpha_{1_j}\alpha_{2_\ell},  1\le j\le n_1$, $1\le \ell \le n_2$. Then,
\begin{eqnarray}\label{pit-expd}
\alpha_{1_j}\alpha_{2_\ell} F_2  =  0,  1\le j\le n_1, 1\le \ell \le n_2
&\Longleftrightarrow & d F_2 = 0.
\end{eqnarray}
Hereby, by  (\ref{pit-expc}) and (\ref{pit-expd}),
\begin{eqnarray}\label{pit-expe}
F \equiv 0 & \Longleftrightarrow & F_1 = 0 \mbox{ and } d F_2 = 0.
\end{eqnarray}
We choose a brand new variable $x$, delete $g_1$ and $g_1$ and replace $g$ with $dx$ in $\mathcal{  C'}$. By (\ref{pit-expe}), we have
\begin{eqnarray}\label{pit-expf}
F \equiv 0 & \Longleftrightarrow & \mathcal{  C'} \equiv 0.
\end{eqnarray}

We repeat Step 2 to continue reducing nodes in $\mathcal{  C'}$ until $\mathcal{C'}$ has only the root node.
At that point, $\mathcal{  C'}$ represents a linear sum of variables so that we can easily check whether the sum is identical to
zero or not.

The above process needs to perform at most $s(n)$ reduction operations  for every gate (or node) in $\mathcal{  C}$.
For a $+$ gate, the related reduction is done in time $O(s(n))$.
For a $\times$  gate $g$ with two inputs $g_1$ and $g_2$, as in (\ref{pit-expa}) and (\ref{pit-expd}),
we need to compute $\alpha_{1_j}\alpha_{2_\ell}$ for $1\le j\le n_1$  and  $1\le \ell \le n_2$ and their greatest common divisor
$d$. Since $n_1\le s(n)$ and $n_2\le s(n)$, the time needed is $O(s^3(n)\log |\mathcal{F}|)$. By adding the time needed for every gate,
the total time is $O(s^4\log |\mathcal{F}|)$.
\end{proof}

\section{Circuit Reconstruction and Variable Replacements}\label{trans}

In this section, we shall introduce the circuit reconstruction and variable replacement techniques
that are developed in \cite{bill13} to transform the $q$-monomial testing problem to the multilinear
monomial testing problem. This transformation is an extension of the method designed in
\cite{bill12} for formula to general circuits.

For any given polynomial $F(x_1,x_2,\ldots,x_n)$ represented by a circuit $\mathcal{ C}$
of size $s(n)$, we reconstruct the circuit $\mathcal{ C}$ and replace variables in three
steps as follows:



{\bf Duplicating terminal nodes.} For each variable $x_i$,
if $x_i$ is the input to a list of gates $g_1, g_2, \ldots, g_{\ell}$, then create $\ell$
terminal nodes $u_1, u_2, \ldots, u_{\ell}$ such that each of them represents
a copy of the variable $x_i$ and $g_j$ receives input from $u_j$, $1\le j\le \ell$.

Let $\mathcal{ C}^*$ denote the reconstructed circuit after the above step.
Obviously, both circuits $\mathcal{ C}$ and $\mathcal{ C^*}$ compute the same polynomial $F$.

{\bf Adding  new $\times$ gates and new variables}.
For every edge $e_i$ in $\mathcal{ C^*}$ (including every edge between a gate and a terminal node)
such that $e_i$ conveys the output of $u_i$ to $v_i$,
add a new $\times$ gate $g_i$ that multiplies the output of $u_i$ with a new variable $z_i$ and passes
the outcome to $v_i$.

Assume that a list of $h$ new $z$-variables
$z_1, z_2, \ldots, z_h$ have been introduced into the circuit $\mathcal{ C'}$.
Let $F'(z_1, z_2, \ldots, z_h, x_1, x_2,
\ldots, x_n)$ be the new polynomial represented by $\mathcal{ C'}$.

{\bf Variable replacements:} Here, we start with the new circuit $\mathcal{ C'}$
that computes $F'(z_1, z_2, \ldots, z_h, x_1, x_2,\ldots, x_n)$.
For each variable $x_i$, we replace it with a "weighted" linear sum  of $q-1$  new $y$-variables
$y_{i1},y_{i2},\ldots,y_{i(q-1)}$.
The replacements work as follows:
For each variable $x_i$, we first
add $q-1$ new terminal nodes that represent  $q-1$ many $y$-variables $y_{i1},y_{i2},\ldots,y_{i(q-1)}$.
Then, for each terminal node $u_j$ representing $x_i$ in $\mathcal{ C'}$, we replace $u_j$ with a $+$ gate.
Later, for each new $+$ gate $g_j$ that is created for $u_j$ of $x_i$, let $g_j$ receive input from $y_{i1},y_{i2},\ldots,y_{i(q-1)}$.
That is, we add an edge from each of such $y$-variables to $g_j$. Finally, for each edge $e_{ij}$ from $y_{ij}$ to $g_j$,
replace $e_{ij}$ by a new $\times$ gate that takes inputs from  $y_{ij}$ and a new $z$-variable $z_{ij}$ and sends the output to $g_j$.

Let $\mathcal{ C''}$ be the circuit resulted from the above transformation, and
$$
G(z_1,\ldots,z_h,y_{11},\ldots,y_{1(q-1)},\ldots,y_{n1},\ldots,y_{n(q-1)})
$$
be the polynomial computed by the circuit $\mathcal{ C''}$.
The following two lemmas are obtained in \cite{bill13}.

\begin{lemma}(\cite{bill13})\label{rtm-lem1}
Let the $t$ be the length of longest path from the root gate of $\mathcal{ C}$ to its terminal nodes.
$F(x_1,x_2,\ldots,x_n)$ has a monomial $\pi$ of degree $k$  in its sum-product expansion if and only if there is
a monomial $\alpha \pi$ in the sum-product expansion of $F'(z_1, z_2, \ldots, z_h, x_1, x_2,\ldots, x_n)$
such that $\alpha$ is a multilinear monomial of $z$-variables with degree $\le tk + 1$.
Furthermore, if $\pi$ occurs more than once in the sum-product expansion of $F'$, then
every occurrence of $\pi$ in $F'$ has a unique coefficient $\alpha$; and any two different monomials of $x$-variables
in $F'$ will have different coefficients that are multilinear products of $z$-variables.
\end{lemma}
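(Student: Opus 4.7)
The plan is to prove the lemma by means of parse trees of the circuit. A parse tree of $\mathcal{C}^*$ is built top-down from the root by choosing exactly one input at every $+$ gate and keeping both inputs at every $\times$ gate, continuing until all leaves are terminal nodes. The sum-product expansion of the polynomial computed by $\mathcal{C}^*$ is exactly the sum over all parse trees $T$ of their leaf-products. Because $\mathcal{C}'$ is obtained from $\mathcal{C}^*$ by inserting on each edge $e$ a new $\times$ gate that multiplies by a fresh variable $z_e$, parse trees of $\mathcal{C}'$ correspond canonically to parse trees of $\mathcal{C}^*$: a parse tree $T$ contributes to $F'$ the same $x$-monomial it contributes to $F$, multiplied by $\alpha_T := \prod_{e \in E(T)} z_e$, where $E(T)$ is the edge set of $T$. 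Since distinct edges carry distinct variables, $\alpha_T$ is automatically multilinear in the $z$-variables, and its degree equals $|E(T)|$.

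For the forward direction, given an $x$-monomial $\pi$ of degree $k$ in $F$, pick any parse tree $T$ of $\mathcal{C}^*$ that outputs $\pi$. Thanks to the duplication of terminal nodes, the $k$ variable-leaves of $T$ are $k$ distinct nodes of $\mathcal{C}^*$, each having a root-to-leaf path of length at most $t$. Hence $|E(T)| \le tk$, so the lifted parse tree in $\mathcal{C}'$ contributes a term $\alpha_T \pi$ with $\deg_z(\alpha_T) \le tk \le tk+1$. For the backward direction, suppose $\alpha\pi$ appears in the expansion of $F'$ with $\alpha$ multilinear in the $z$-variables. Each $z_e$ labels a unique edge, so $\alpha$ pins down a specific edge set of $\mathcal{C}'$; the only parse tree that can contribute the term $\alpha\pi$ uses exactly those edges, and erasing its $z$-factors yields a parse tree of $\mathcal{C}^*$ outputting $\pi$, so $\pi$ is a monomial of $F$.

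The uniqueness claims fall out of the same bijection. Two distinct parse trees must differ in the selection made at some $+$ gate, which forces their edge sets to differ and hence produces distinct coefficients $\alpha_T$ in $F'$. Consequently, multiple occurrences of the same $\pi$ in the sum-product expansion of $F'$ must be accompanied by pairwise-distinct multilinear $z$-monomials, and two different $x$-monomials in $F'$ are also produced by different parse trees and so carry different $z$-coefficients.

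The main obstacle I expect is the careful edge-counting behind the degree bound. A single edge can lie on several root-to-leaf paths, so summing path lengths naively over the $k$ leaves counts such edges more than once; the bound $|E(T)| \le kt$ follows because each edge lies on at least one root-to-leaf path, making the sum of path lengths an upper bound on $|E(T)|$. A secondary technical point is that $\mathcal{C}$ need not be a formula: the parse-tree formalism and the bijection between parse trees of $\mathcal{C}^*$ and $\mathcal{C}'$ are insensitive to DAG structure above the leaves, as long as terminals have been duplicated so that each variable appears at exactly one terminal node of $\mathcal{C}^*$.
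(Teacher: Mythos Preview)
The paper does not prove this lemma itself; it is quoted from \cite{bill13}, so there is no in-paper argument to compare your attempt against. Your parse-tree approach is the natural one and is almost certainly what the cited proof does: identify each term in the sum-product expansion with a parse tree, observe that the $z$-insertion tags each parse tree with the product of $z_e$ over its edges, and read off the degree bound and the uniqueness statements from the fact that a parse tree is determined by its edge set.

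Your argument is correct when $\mathcal{C}$ is a formula. After terminal duplication $\mathcal{C}^*$ is an honest tree, every parse tree $T$ is a subtree of $\mathcal{C}^*$, distinct edges of $T$ are distinct edges of $\mathcal{C}^*$ and therefore carry distinct $z$-variables, and the $k$ leaves of $T$ are $k$ distinct terminal nodes of $\mathcal{C}^*$. This gives both the multilinearity of $\alpha_T$ and the edge-count bound $|E(T)|\le kt$, exactly as you argue. Since Theorem~\ref{thm-1} and Corollary~\ref{cor} only invoke the lemma for formulas, this already covers everything the paper needs.

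Your final paragraph, however, over-reaches: the extension to general DAG circuits does \emph{not} go through as written. Terminal duplication only forces terminals to have fan-out one; an internal gate $g$ with fan-out greater than one can still be visited several times by a single parse tree, and each visit traverses the \emph{same} circuit edges below $g$. The corresponding $z$-variable then occurs with multiplicity in $\alpha_T$, so $\alpha_T$ is not multilinear, and the $k$ leaves need not be $k$ distinct nodes of $\mathcal{C}^*$. A two-gate circuit for $(x_1+x_2)^2$ (one $+$ gate feeding both inputs of a $\times$ gate) already exhibits this: the parse tree producing $x_1^2$ uses the edge from the $x_1$-terminal to the $+$ gate twice, and its $z$-coefficient contains a square. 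So the claim that ``the parse-tree formalism \ldots\ is insensitive to DAG structure above the leaves'' is false; handling genuine circuits would require, for instance, first unrolling $\mathcal{C}^*$ into a formula before inserting the $z$-variables.
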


\begin{lemma}(\cite{bill13})\label{rtm-lem2}
Let $F(x_1,x_2,\ldots,x_n)$ be any given polynomial represented by
a circuit $\mathcal{ C}$ and $t$ be the length of the longest path of $\mathcal{ C}$. For any fixed integer $q\ge 2$,
$F$ has a $q$-monomial of $x$-variables with degree $k$, then
$G$ has a unique multilinear monomial $\alpha \pi$ such that $\pi$ is a degree $k$ multilinear monomial of $y$-variables
and $\alpha$ is a multilinear monomial of $z$-variables with degree $\le k(t+1) +1 $. If $F$ has no $q$-monomials, then
$G$ has no multilinear monomials of $y$-variables, i.e., $G$ has no monomials of the format $\beta \phi$ such that
$\beta$ is a monomial of $z$-variables and
 $\phi$ is a multilinear monomial of $y$-variables.
\end{lemma}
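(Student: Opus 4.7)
The plan is to chain together Lemma \ref{rtm-lem1} with a careful case analysis of the final variable-replacement step. First I would apply Lemma \ref{rtm-lem1} to the intermediate circuit $\mathcal{C}'$ and polynomial $F'$: any degree-$k$ monomial $\pi$ in $F$ corresponds to a monomial $\alpha \pi$ in $F'$ where $\alpha$ is a multilinear $z$-monomial of degree at most $tk+1$, and distinct $x$-monomials in $F'$ receive distinct $z$-coefficients. This gives the baseline correspondence, on top of which I can reason purely about what happens when each terminal node for $x_i$ is replaced by $\sum_{\ell=1}^{q-1} z_{ij\ell}\, y_{i\ell}$.

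Next I would analyze the two directions. For the forward direction, suppose $F$ has a $q$-monomial $\pi = x_{i_1}^{s_1}\cdots x_{i_r}^{s_r}$ of degree $k$ with every $s_j \le q-1$. In $F'$ this appears as $\alpha \pi$ with $\alpha$ multilinear in $z$-variables of degree $\le tk+1$. After the replacement, each of the $s_j$ occurrences of $x_{i_j}$ is an independent sum $\sum_\ell z \cdot y_{i_j \ell}$, so the product expands as a sum over functions assigning each occurrence of $x_{i_j}$ to some $y_{i_j\ell}$. A term in this expansion is multilinear in $y$-variables precisely when these assignments pick $s_j$ distinct indices $\ell$; since $s_j \le q-1$, at least one such choice exists. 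The resulting monomial has the form $\alpha \cdot \beta \cdot \pi'$, where $\pi'$ is a degree-$k$ multilinear $y$-monomial and $\beta$ is a multilinear product of exactly $k$ new $z$-variables (one per terminal-node occurrence), giving total $z$-degree at most $tk+1+k = k(t+1)+1$.

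For uniqueness, I would combine two observations: Lemma \ref{rtm-lem1} already guarantees that the ``outer'' $\alpha$ uniquely identifies both the originating $x$-monomial and the pattern of occurrences in $\mathcal{C}'$; and the ``inner'' $\beta$ is a product of fresh $z$-variables, each of which lives on exactly one edge, so $\beta$ records which $y_{i\ell}$ was chosen at which occurrence. Hence two distinct multilinear $y$-monomials in $G$ cannot arise from the same choice, and two distinct choices yield distinct $z$-coefficients; thus the multilinear $y$-monomial $\alpha\beta\pi'$ is unique. For the reverse direction, suppose $F$ has no $q$-monomial, so every monomial of $F$ contains some $x_i$ with multiplicity $\ge q$. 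In any expansion term of $G$, producing a multilinear $y$-monomial would require assigning $\ge q$ occurrences of $x_i$ to distinct elements of $\{y_{i1},\dots,y_{i(q-1)}\}$, which is impossible by pigeonhole — so no multilinear $y$-monomial can appear in $G$.

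The main obstacle I expect is the bookkeeping behind uniqueness: the substitution introduces many new $z$-variables and the same $y$-monomial $\pi'$ may be reachable from different $x$-monomials in $F'$ via different assignments. The clean resolution is to keep the two layers of $z$-variables conceptually separate — the ``structural'' layer inherited from Lemma \ref{rtm-lem1} and the ``assignment'' layer introduced in the replacement step — and argue injectivity on each layer independently. Once that separation is in place, both the degree bound $k(t+1)+1$ and the uniqueness statement follow directly, and the ``no $q$-monomial'' case becomes an immediate pigeonhole argument on the $y$-indices.
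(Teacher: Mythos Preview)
The paper does not actually prove Lemma~\ref{rtm-lem2}; it is stated with the attribution ``\cite{bill13}'' and introduced by the sentence ``The following two lemmas are obtained in \cite{bill13}.'' So there is no in-paper proof to compare against. That said, your plan is the natural one and is essentially correct: layer Lemma~\ref{rtm-lem1} (which controls the structural $z$-variables in $\mathcal{C}'$) on top of a direct expansion of the replacement $x_i \mapsto \sum_{\ell=1}^{q-1} z_{(\cdot)\ell}\, y_{i\ell}$, then use pigeonhole on the $y$-indices for the negative direction and injectivity of the fresh $z$-variables for uniqueness. Your two-layer bookkeeping (``structural'' $\alpha$ from Lemma~\ref{rtm-lem1}, ``assignment'' $\beta$ from the replacement edges) is exactly the right way to organize the uniqueness argument, and the degree count $tk+1+k = k(t+1)+1$ is immediate once you note that the replacement contributes one new $z$-variable per $x$-occurrence.

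One small point worth making explicit when you write it up: the construction in Section~\ref{trans} introduces \emph{one} set $\{y_{i1},\dots,y_{i(q-1)}\}$ per variable $x_i$, shared across all duplicated terminal nodes for $x_i$, while the $z$-variables on the new edges are fresh per terminal node. You implicitly rely on both facts --- the shared $y$'s are what make the pigeonhole argument work (all $s_j\ge q$ occurrences of $x_{i_j}$ compete for the same $q-1$ indices), and the fresh $z$'s are what make $\beta$ injective. Stating this clearly will make the uniqueness paragraph airtight; otherwise a reader might worry that two different occurrence patterns could produce the same $\beta$.
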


\section{A Faster Deterministic Algorithm for Testing $q$-Monomials}\label{FDTM-section}

Chen \cite{bill13} obtained a $O^*(12.8^k)$ deterministic algorithm for testing
$q$-monomials in a multivariate polynomial represented by a formula. In section,
we shall devise a faster algorithm with a $O^*(5.44^k)$ time bound.
In contrast to the two derandomization processes in \cite{bill13,chen12b},
we first the near optimal family of perfect hashing functions by Naor {\em et al.} \cite{naor95}
to derandomize the group algebraic variable replacements  and Lemma \ref{pit-lem} to derandomize
the polynomial identity testing for $S$-read-once formulas. In addition,
we use Walsh-Hadamard transformation to speed up the multiplication of
group algebraic elements, which is formally justified by Lemma \ref{wh-lem2}.

\begin{definition}\label{def-hash}
(See, Chen {\em et al.} \cite{jianer-chen07}, Naor {\em et al.} \cite{naor95}) Let $n$ and $k$ be two integers such that $1\le k\le n$. Let
$\mathcal{ A} =\{1, 2, \ldots, n\}$ and $\mathcal{ K} = \{1, 2, \ldots,
k\}$. For any family $\mathcal{H}$ of functions mapping from $\mathcal{A}$ to $\mathcal{K}$, we say
$\mathcal{H}$ is an $(n,k)$-family of {\em perfect
hashing functions} if for any subset $S$ of $k$ elements in $\mathcal{
A}$, there is an $h \in \mathcal{ H}$ that is injective
from $S$ to $\mathcal{ K}$, i.e., for any $x, y \in S$, $h(x)$ and
$h(y)$ are distinct elements in $\mathcal{ K}$.
\end{definition}

We assume, without loss of generality, that
when a polynomial has  $q$-monomials in its sum-product expansion,
one of the $q$-monomials has exactly a  degree of $k$ and
all the rest of those will have degrees at least $k$.

Consider any given polynomial $F(x_1,x_2,\ldots,x_n)$ that is represented by a formula $\mathcal{C}$ of size
$s(n)$. Let $d = \log (k(s(n)+1)+1) + 1$ and $\mathcal{ F} = \mbox{GF}(2^d)$ be a finite field of $2^d$ elements.
Note that $\mathcal{ F}$ has characteristic $2$.

\begin{quote}
Algorithm \mbox{FDTM} (\underline{F}aster \underline{D}eterministic
\underline{T}esting of $q$-\underline{M}onomials):
\begin{description}
\item[1.] Following Section \ref{trans}, reconstruct $\mathcal{ C}$ to obtain $\mathcal{ C}^*$
 that computes the same polynomial $F$ and then introduce new $z$-variables to $\mathcal{ C}^*$
to obtain $\mathcal{ C'}$ that computes $F'(z_1,z_2,\ldots,z_h,x_1,x_2,\ldots, x_n)$.
Perform variable replacements to obtain $\mathcal{ C''}$ that transforms $F'$ to
$$
G(z_1,\ldots,z_h,y_{11},\ldots,y_{1(q-1)},\ldots,y_{n1},\ldots,y_{n(q-1)}).
$$

\item[2.] Construct with the algorithm by Naor {\em at el.}
\cite{naor95}  a $((q-1)n s(n), k)$-family of perfect hashing functions
$\mathcal{ H}$ of size $e^k k^{O(\log k)}\log^2 ((q-1)n s(n))$.

\item[3.] Select $k$ linearly
independent vectors $\vec{v}_1,\ldots,\vec{v}_{k} \in Z^{k}_2$. (No
randomization is needed at this step, either.)

\item[4] For each perfect
hashing function $\lambda \in\mathcal{ H}$ do
\begin{description}
\item[4.1.] Let $\tau(i,j)$
be any given one-to-one mapping from $\{(i,j) | 1\le i\le n \mbox{\ and\ } 1\le j\le q-1\}$
to $\{1,2,\ldots,(q-1)n\}$ to label variables $y_{ij}$.
Replace each variable $y_{ij}$ in $G$ with
$(\vec{v}_{\lambda(\tau(i,j))} + \vec{v}_0)$, $1\le i \le n$ and $1\le j\le q-1$.

\item[4.2.] Use $\mathcal{ C''}$ to calculate
\begin{eqnarray}\label{exp-thm-dt}
G'&=& G(z_1,\ldots,z_h,(\vec{v}_{\lambda(\tau(1,1))}+\vec{v}_0),\ldots,(\vec{v}_{\lambda(\tau(1,q-1))}+\vec{v}_0),
 \nonumber \\
&& \hspace{7mm}
\ldots,(\vec{v}_{\lambda(\tau(n,1))}+\vec{v}_0),\ldots,(\vec{v}_{\lambda(\tau(n,q-1))}+\vec{v}_0)) \nonumber \\
&=&  \sum_{j=1}^{2^k} f_j(z_1,\ldots,z_h) \cdot \vec{v}_j,
\end{eqnarray}
where each $f_j$ is a polynomial of degree $\le k(s(n)+1)+1$ (see, Lemma \ref{rtm-lem2}) over the finite
field $\mathcal{ F}=\mbox{GF}(2^d)$, and $\vec{v}_j$ with $1\le j\le 2^k$ are the $2^k$ distinct vectors in
$Z^k_2$.

\item[4.3.] Perform polynomial identity testing with the algorithm of Lemma \ref{pit-lem}
for every $f_j$ over $\mathcal{ F}$. Stop
and return {\em "yes"} if one of them is not identical to zero.
\end{description}

\item[5.] If all perfect hashing functions  $\lambda \in \mathcal{ H}$  have
been tried without returning {\em "yes"}, then stop and output {\em "no"}.
\end{description}
\end{quote}

The group algebra technique established by Koutis \cite{koutis08} assures
the following two properties:

\begin{lemma}\label{rtm-lem3}
(\cite{koutis08})~  Replacing all the variables $y_{ij}$
in $G$ with group algebraic elements $\vec{v}_{ij}+\vec{v}_0$ will make all
monomials $\alpha \pi$ in $G'$ to become zero, if $\pi$ is non-multilinear with respect to $y$-variables. Here,
$\alpha$ is a product of $z$-variables.
\end{lemma}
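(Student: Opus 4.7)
The plan is to exploit the specific algebraic structure of $\mathcal{F}[Z_2^k]$ when $\mathcal{F}$ has characteristic $2$, which is exactly the setting fixed in algorithm FDTM with $\mathcal{F}=\mathrm{GF}(2^d)$. The two key facts I would rely on are recalled in Section \ref{def-section}: for every $\vec{v}\in Z_2^k$ one has $\vec{v}\cdot\vec{v}=\vec{v}_0$ (because multiplication in $Z_2^k$ is coordinate-wise addition mod $2$), and $\vec{z}+\vec{z}=\vec{0}$ holds for every group-algebra element $\vec{z}$ (because the scalars live in a field of characteristic $2$). The strategy is to use these two identities to show that each substituted factor of the form $\vec{v}_{ij}+\vec{v}_0$ is a square-zero element of $\mathcal{F}[Z_2^k]$, and then to deduce the claim purely from multiplicativity.

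First I would compute directly $(\vec{v}_{ij}+\vec{v}_0)^2$. Expanding in the group algebra and using the distributive law,
\begin{equation*}
(\vec{v}_{ij}+\vec{v}_0)^2
= \vec{v}_{ij}\cdot\vec{v}_{ij} + \vec{v}_{ij}\cdot\vec{v}_0 + \vec{v}_0\cdot\vec{v}_{ij} + \vec{v}_0\cdot\vec{v}_0
= \vec{v}_0 + \vec{v}_{ij} + \vec{v}_{ij} + \vec{v}_0 = \mathbf{0},
\end{equation*}
where the second equality uses $\vec{v}\cdot\vec{v}=\vec{v}_0$ together with $\vec{v}_0$ being the identity of the group, and the last equality uses characteristic $2$ twice. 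Thus every substituted $y$-value is nilpotent of order $2$ in $\mathcal{F}[Z_2^k]$.

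Next I would lift this to whole monomials. Take any monomial $\alpha\pi$ that appears in the sum--product expansion of $G$ with $\alpha$ a product of $z$-variables and $\pi$ a monomial in $y$-variables that is non-multilinear. By definition of non-multilinearity some $y_{ij}$ occurs in $\pi$ with exponent $s\ge 2$, so $\pi=y_{ij}^{\,s}\,\pi'$ for some monomial $\pi'$. After the substitution $y_{ij}\mapsto \vec{v}_{\lambda(\tau(i,j))}+\vec{v}_0$ the contribution of $\alpha\pi$ to $G'$ becomes
\begin{equation*}
\alpha\cdot\bigl(\vec{v}_{\lambda(\tau(i,j))}+\vec{v}_0\bigr)^{\!s}\cdot(\pi'\text{ substituted}).
\end{equation*}
Writing $(\vec{v}_{\lambda(\tau(i,j))}+\vec{v}_0)^{s}=\bigl((\vec{v}_{\lambda(\tau(i,j))}+\vec{v}_0)^2\bigr)\cdot (\vec{v}_{\lambda(\tau(i,j))}+\vec{v}_0)^{s-2}$ and invoking the computation above, this factor is $\mathbf{0}$, and since $\mathbf{0}$ absorbs under multiplication in the group algebra (and the $z$-variable scalar $\alpha$ plays no role in the group-algebra arithmetic), the whole substituted monomial is $\mathbf{0}$.

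The only subtle point, and the one I would state explicitly, is the distinction between $z$-variables (which remain formal indeterminates in the surrounding polynomial ring $\mathcal{F}[Z_2^k][z_1,\dots,z_h]$) and $y$-variables (which get substituted by group-algebra elements). Because $G'$ is regarded as a polynomial in the $z_i$ with coefficients in $\mathcal{F}[Z_2^k]$, each monomial $\alpha\pi$ contributes a coefficient $c_\pi\in\mathcal{F}[Z_2^k]$ to the $z$-monomial $\alpha$; the argument above shows $c_\pi=\mathbf{0}$ whenever $\pi$ is non-multilinear, which is exactly the statement of the lemma. No further computation is required.
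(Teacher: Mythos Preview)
Your proof is correct and follows essentially the same route as the paper: both compute $(\vec{v}+\vec{v}_0)^2=\mathbf{0}$ in $\mathcal{F}[Z_2^k]$ using $\vec{v}\cdot\vec{v}=\vec{v}_0$ and characteristic~$2$, and then conclude that any non-multilinear $y$-monomial vanishes under the substitution. Your write-up is in fact more careful than the paper's, spelling out the factorization $(\vec{v}+\vec{v}_0)^s=(\vec{v}+\vec{v}_0)^2\cdot(\vec{v}+\vec{v}_0)^{s-2}$ and the role of the $z$-variables, whereas the paper simply states that the lemma ``follows directly'' from the squaring identity.
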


\begin{proof}
Recall that $\mathcal{ F}$ has characteristic $2$.
For any $\vec{v} \in Z^k_2$, in the group algebra $\mathcal{ F}[Z^k_2]$,
\begin{eqnarray}\label{rt-1}
(\vec{v}+\vec{v}_0)^2 &=& \vec{v}\cdot\vec{v} + 2\cdot\vec{v}\cdot\vec{v}_0 + \vec{v}_0\cdot\vec{v}_0 \nonumber \\
 &=&\vec{v}_0+ 2\cdot\vec{v} + \vec{v}_0 \nonumber \\
 &=& 2\cdot \vec{v}_0 + 2\cdot\vec{v} = {\bf 0}.
\end{eqnarray}
Thus, the lemma follows directly from expression (\ref{rt-1}).
\end{proof}

\begin{lemma}\label{rtm-lem4}
(\cite{koutis08})~ Replacing all the variables $y_{ij}$ in $G$
with group algebraic elements $\vec{v}_{ij}+\vec{v}_0$ will make any monomial $\alpha \pi$
to become zero,  if and only if  the vectors $\vec{v}_{ij}$  are linearly dependent in the vector space $Z^k_2$.
Here, $\pi$ is a multilinear monomial of $y$-variables and $\alpha$ is a product of $z$-variables,
Moreover, when $\pi$ becomes non-zero after the replacements,
it will become the sum of all the vectors in the linear space spanned by those vectors.
\end{lemma}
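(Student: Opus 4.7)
The plan is to expand the substituted monomial as a formal sum in $\mathcal{F}[Z_2^k]$, rewrite this sum by grouping terms according to the linear map induced by the $\vec{v}_{ij}$'s, and then exploit the fact that $\mathcal{F}$ has characteristic $2$. Since $\alpha$ is a product of $z$-variables and is untouched by the $y$-substitution, the monomial $\alpha\pi$ becomes $\alpha$ times the substitution of $\pi$, so the vanishing statement reduces entirely to understanding what $\pi$ becomes. I may therefore assume $\pi = y_{i_1 j_1}\cdots y_{i_\ell j_\ell}$ is multilinear, and write $\vec{v}_s := \vec{v}_{i_s j_s}$ for brevity.

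First I would substitute and distribute: the substitution turns $\pi$ into
\begin{equation*}
\prod_{s=1}^{\ell}(\vec{v}_s+\vec{v}_0) \;=\; \sum_{S\subseteq [\ell]} \prod_{s\in S}\vec{v}_s \;=\; \sum_{S\subseteq[\ell]} \vec{w}_S,
\end{equation*}
where $\vec{w}_S := \sum_{s\in S}\vec{v}_s$ is computed in the group $Z_2^k$ (because group multiplication in $Z_2^k$ corresponds to componentwise addition mod $2$, and $\vec{v}_0$ is the identity). Let $\phi:Z_2^\ell\to Z_2^k$ be the linear map sending the $s$-th standard basis vector to $\vec{v}_s$, so that $\vec{w}_S=\phi(\mathbf{1}_S)$. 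The image of $\phi$ is exactly $V := \mathrm{span}(\vec{v}_1,\ldots,\vec{v}_\ell)$, and every $\vec{w}\in V$ has exactly $|\ker\phi|$ preimages.

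Next I would regroup the sum by fiber and invoke characteristic $2$. Grouping by $\vec{w}\in V$,
\begin{equation*}
\sum_{S\subseteq[\ell]}\vec{w}_S \;=\; |\ker\phi|\cdot \sum_{\vec{w}\in V}\vec{w} \;=\; 2^{\ell-r}\cdot \sum_{\vec{w}\in V}\vec{w},
\end{equation*}
where $r=\dim V$. Now split into the two cases. If $\vec{v}_1,\ldots,\vec{v}_\ell$ are linearly dependent then $r<\ell$, so $2^{\ell-r}$ is an even integer and hence equals $0$ in the characteristic-$2$ field $\mathcal{F}$; the sum collapses to $\mathbf{0}$, killing the monomial $\alpha\pi$. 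If instead the $\vec{v}_s$'s are linearly independent, then $\ker\phi$ is trivial, $2^{\ell-r}=1$, and $\phi$ is a bijection onto $V$; so the substituted $\pi$ becomes the formal sum $\sum_{\vec{w}\in V}\vec{w}$, which is a nonzero element of $\mathcal{F}[Z_2^k]$ (it has $|V|=2^\ell$ distinct group elements each with coefficient $1$). This simultaneously delivers both directions of the ``if and only if'' and the final ``moreover'' clause.

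The only delicate point I anticipate is being precise about the fiber-counting step over the base field $\mathcal{F}$ of characteristic $2$: one must justify that the integer $2^{\ell-r}$, when viewed as a scalar in $\mathcal{F}$, is $0$ exactly when $\ell>r$, which is immediate from $\mathcal{F}$ having characteristic $2$. Everything else is a straightforward expansion, and no further input is needed beyond the group-algebra conventions recalled in Section~\ref{def-section}.
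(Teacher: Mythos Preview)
Your proof is correct and reaches the same conclusion as the paper, but via a somewhat different organization. The paper, for the dependent case, isolates a single nontrivial relation $T\subseteq V$ with $\prod_{\vec{v}\in T}\vec{v}=\vec{v}_0$, then shows the factor $\prod_{\vec{v}\in T}(\vec{v}+\vec{v}_0)$ already vanishes by pairing each $S\subseteq T$ with its complement $T\setminus S$ (both yield the same group element, so they cancel in characteristic~$2$); for the independent case it argues directly that distinct subsets of $V$ give distinct products. You instead package both cases uniformly through the linear map $\phi:Z_2^\ell\to Z_2^k$ and a fiber-counting/rank--nullity argument, obtaining the scalar $2^{\ell-r}$ in one stroke. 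Your route is a bit more conceptual and avoids choosing a specific dependence relation; the paper's route is more hands-on but makes the pairing that causes cancellation completely explicit. Either way the substance is the same expansion $\prod_s(\vec{v}_s+\vec{v}_0)=\sum_{S}\vec{w}_S$ followed by a characteristic-$2$ cancellation, so there is no gap.
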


\begin{proof}
The analysis below gives a proof for this lemma.
Suppose $V$  is a set of  linearly dependent vectors
in $Z^k_2$. Then, there exists a nonempty subset $T \subseteq V$ such that $\prod_{\vec{v}\in T} \vec{v}= \vec{v}_0$.
For any $S\subseteq T$, since
$\prod_{\vec{v}\in T} \vec{v} =
(\prod_{\vec{v}\in S} \vec{v}) \cdot (\prod_{\vec{v}\in T-S} \vec{v})$, we have
$\prod_{\vec{v}\in S} \vec{v} = \prod_{\vec{v}\in T-S} \vec{v}$.
Thereby, we have
\begin{eqnarray}
\prod_{\vec{v}\in T }(\vec{v} + \vec{v}_0)
&=& \sum_{S\subseteq T} (\prod_{\vec{v}\in S}\vec{v}) = {\bf 0}, \nonumber
\end{eqnarray}
since every $\prod_{\vec{v}\in S}\vec{v}$ is paired by the same
$\prod_{\vec{v}\in T-S}\vec{v}$ in the sum above and the addition of the pair
is annihilated because $\mathcal{ F} $ has characteristic $2$. Therefore,
\begin{eqnarray}
\prod_{\vec{v}\in V}(\vec{v} + \vec{v}_0)
&=& \left(~\prod_{\vec{v}\in T} (\vec{v}+ \vec{v}_0)\right) \cdot \left(~\prod_{\vec{v}\in V-T}(\vec{v}+ \vec{v}_0)\right)  \nonumber \\
& =&  0 \cdot \left(~\prod_{\vec{v}\in V-T}(\vec{v}+ \vec{v}_0)\right) = {\bf 0}. \nonumber
\end{eqnarray}

Now consider that vectors in $V$ are linearly independent.
For any two distinct subsets $S, T\subseteq V$, we must have
$\prod_{\vec{v}\in T} \vec{v} \not=\prod_{\vec{v}\in S} \vec{v}$, because otherwise
vectors in $S \cup T - (S \cap T)$ are linearly dependent, implying that vectors in $V$ are linearly dependent.
Therefore,
\begin{eqnarray}
\prod_{\vec{v}\in V}(\vec{v} + \vec{v}_0)
&=& \sum_{T\subseteq V}(\prod_{\vec{v}\in T} \vec{v}) \nonumber
\end{eqnarray}
is the sum of all the $2^{|V|}$ distinct vectors spanned by $V$.
\end{proof}

\subsection{Proof of Theorem \ref{thm-1}}

\begin{proof}
We only need to show that algorithm FDTM is the desired algorithm.
As in \cite{bill13}, the correctness of algorithm FDTM is guaranteed by the nature of perfect hashing
and the correctness of the underlying group algebraic approach.
We focus on analyzing the time complexity of the algorithm.

Note that $q$ is a fixed constant.
By Naor {\em at el.}\cite{naor95}, Step 2 can be done in
$O(e^k k^{O(\log k)} \log^2 ((q-1)n)) = O^*(2.72^k)$
time, where $e = 2.718281\cdots $ is the natural constant. Step 3 can be easily done in $O(k^2)$ time.

It follows from Lemma \ref{rtm-lem3} that all those monomials that are not $q$-monomials in
$F$, and hence in $F'$, will be annihilated when variables $y_{ij}$
are replaced by $(\vec{v}_{\lambda(\tau(i,j))} + \vec{v}_0)$ in $G$ at Step 4.1.

Consider any given $q$-monomial $\pi$ of degree $k$ in $F$,
by Lemma \ref{rtm-lem2}, there are monomials
$\alpha\pi'$ in $F'$ such that the following are true:   $\alpha$ is a multilinear monomial of $z$-variables with degree $\le k(s(n)+1)+1$,
$\pi'$ is a degree $k$ multilinear monomial of $y$-variables, and
and all such monomials $\alpha$'s are distinct.
Since $\pi'$ has $k$ distinct $y$-variables, by the nature of perfect hashing functions,
there exists at least one hashing function $h \in \mathcal{H}$ that, at Step 4.1,
assigns each of the $k$ linearly independent vectors selected
at Step 3 to a unique $y$-variable in $\pi'$ to accomplish the replacements. At Step 4.2,
by Lemma \ref{rtm-lem4}, $\pi'$ (hence, $\alpha\pi$) will
survive the replacements at Step 4.1. Let $\mathcal{ S}$ be the
set of all the surviving $q$-monomials $\alpha\pi'$. Again,  by Lemma \ref{rtm-lem4}, we have
\begin{eqnarray}
G'&=& G(z_1,\ldots,z_h,(\vec{v}_{\lambda(\tau(1,1))}+\vec{v}_0),\ldots,(\vec{v}_{\lambda(\tau(1,q-1))}+\vec{v}_0),
 \nonumber \\
&& \hspace{7mm}
\ldots,(\vec{v}_{\lambda(\tau(n,1))}+\vec{v}_0),\ldots,(\vec{v}_{\lambda(\tau(n,q-1))}+\vec{v}_0)) \nonumber \\
&=&  \sum_{j=1}^{2^k} \left(\sum_{\alpha\pi'\in \mathcal{ S}} \alpha \right) \vec{v}_j 
 = \sum_{j=1}^{2^k} f_j(z_1,\ldots,z_h) \vec{v}_j 
 \not=  0 \nonumber
\end{eqnarray}
since $\mathcal{ S}$ is not empty.
Here, $f_j(z_1,\ldots,z_h)  =  \sum_{\alpha\pi'\in \mathcal{ S}} \beta.$
This means that, conditioned on that $\mathcal{ S}$ is not empty,
there is at least one $f_j$ that is not identical to zero.

For each given hashing function, we shall address the issues of how to calculate $G'$ and the
time needed to do so. Naturally, every element in the group
algebra $\mathcal{ F}[Z^k_2]$ can be represented by a vector in
$Z^{2^k}_2$. Adding two elements in $\mathcal{ F}[Z^k_2]$ is equivalent to
adding the two corresponding vectors in $Z_2^{2^k}$, and the
latter can be done in $O(2^k\log |\mathcal{F}|)$ time via component-wise sum.
In addition, multiplying two elements in $\mathcal{F}[Z^k_2]$ is
equivalent to multiplying the two corresponding vectors in
$Z_2^{2^k}$. By Lemma \ref{wh-lem2}, the latter can be done in $O(k2^{k+1}\log_2|\mathcal{F}|)$ with
the help of Walsh-Hadamard transformation.
By the circuit reconstruction and variable replacements
in Sections \ref{trans}, the size of the circuit $\mathcal{ C''}$ is at most
$O(n s(n)$. Calculating $G'$ by the circuit $\mathcal{ C''}$ consists of $O(n * s(n))$
arithmetic operations of either adding or multiplying two elements
in $\mathcal{ F}[Z^k_2]$ based on the circuit $\mathcal{ C''}$. Hence, the total
time needed is $O(n*s(n) k 2^{k+1}\log |\mathcal{F}|)$.

When $y$-variables are replaced by group algebraic elements at Step 4.1, according to the circuit reconstruction and
$x$-variable replacements in Section \ref{trans}, circuit $\mathcal{C}''$ is  $S$-read-once.
At Step 4, we run the deterministic algorithm of Lemma \ref{pit-lem} for $\mathcal{C}''$  to
simultaneously test whether there is one $f_j$ in $G'$ such that $f_j$ is
not identical to zero. The time needed for this step is $O^*(2^k s^4(n)\log |\mathcal{F}|)$, when a hashing function is given.

Recall that there are $e^k k^{O(\log k)} \log^2 ((q-1)n)) = O^*(2.72^k)$ many hashing functions in $\mathcal{H}$.
Recall also that $\log |\mathcal{F}| = log (k(s(n)+1)+1) +1$.
The total time for the entire
algorithm is  $O^*(2.72^k 2^k  s^4(n)) = O^*(5.44^k s^4(n))$.
\end{proof}

When the circuit size $s(n)$ is a polynomial in $n$, the time bound becomes $O^*(5.44^k).$

\subsection{Proof of Corollary \ref{cor}}

\begin{proof}
We follow the same approach as in proof for Theorem \ref{thm-1}. The difference is that
we treat $z$ as an algebraic symbol and additionally manipulate univariate polynomials of the single variable $z$ in
calculations at every node in the circuit $\mathcal{C}''$. Since we are interested in monomials of
a $z^t \pi$ format, where $\pi$ does not have $z$, the degrees of those polynomials can be bounded by $t$.
That is, if one such polynomial has a degree $>t$, then it can deleted along with its companion factor.
Adding two degree $t$ univariate polynomials needs time $O(t\log |\mathcal{F}|)$, and multiplying two of those polynomials
can be done in $O(t^2 \log |\mathcal{F}|)$. Hereby, it follows from the analysis for Theorem \ref{thm-1} that
the time bound for the corollary is $O^*(t^2 5.44^k s^4(n)).$
\end{proof}

\section{Concluding Comments}
The $5.44^k$ factor in the time bound for Theorem \ref{thm-1} and Corollary \ref{cor} is
contributed by the size of a family of perfect hashing functions  and the time needed for
the group algebraic approach to testing multilinear monomials. The size of the family of perfect hashing functions
by Naor {\em et al.} \cite{naor95} is near optimal. It is also known that the $O^*(2^k)$ time bound
for the algebraic approach to testing multilinear monomials is essentially optimal \cite{koutis09}.
Those two facts may imply that one could not reply on algebraic approach and perfect hashing functions
to improve our time bound. Nevertheless, since the lower bound in \cite{koutis09} is derived
for general circuits, it may be possible to improve the time bound for formulas.

\section*{Acknowledgment}

Shenshi is supported by Dr. Bin Fu's NSF CAREER Award, 2009 April 1 to 2014 March 31.

\end{document}